\newcommand{\blink}[1]{\textnormal{\texttt{#1}}}
\newcommand{\NP}[0]{\blink{NP}\xspace}
\newcommand{\Oh}{{\ensuremath{\mathcal{O}}}}
\newcommand{\eps}{\ensuremath{\varepsilon}\xspace}
\DeclareMathOperator{\true}{true}
\DeclareMathOperator{\false}{false}
\newcommand{\df}{\textit}
\let\doendproof\endproof
\renewcommand\endproof{~\hfill\qed\doendproof}
\begin{document}

\title{On Embeddability of Buses in Point Sets}

\authorrunning{Bruckdorfer et al.}

\author{
  Till Bruckdorfer\inst{1}
  \and
  Michael Kaufmann\inst{1}
  \and
  Stephen Kobourov\inst{2}
  \and
  Sergey Pupyrev\inst{2,3}
}

\institute{
  Wilhelm-Schickard-Institut f\"ur Informatik, Universit\"at T\"ubingen, Germany
  %, \email{\{bruckdor,mk\}@informatik.uni-tuebingen.de}
  \and
  Department for Computer Science, University of Arizona, USA
  %, \email{\{koburov@cs,spupyrev@email\}.arizona.edu}
  \and
  Institute of Mathematics and Computer Science, Ural Federal University, Russia
}

\maketitle

\begin{abstract}
  Set membership of points in the plane can be visualized by connecting corresponding points
  via graphical features, like paths, trees, polygons, ellipses.
  In this paper we study the \emph{bus embeddability problem} (BEP): given a set of colored points
  we ask whether there exists a planar realization with one horizontal straight-line segment per color,
  called bus, such that all points with the same color are connected with vertical line segments to their bus.
  We present an ILP and an FPT algorithm for the general problem.
  For restricted versions of this problem, such as when the relative order of buses is predefined,
  or when a bus must be placed above all its points, we provide efficient algorithms.
  We show that another restricted version of the problem can be solved using 2-stack pushall sorting.
  On the negative side we prove the \NP-completeness of a special case of BEP.
\end{abstract}

% Section 1
\section{Introduction}
\label{sec:introduction}

Visualization of sets is an important topic in graph drawing and information
visualization and the traditional approach relies on representing overlapping sets
via Venn diagrams and Euler diagrams~\cite{simonetto2009fully}.
When more than a handful sets are present, however, such diagrams become difficult to interpret and alternative approaches,
such as compact rectangular Euler diagrams are needed~\cite{riche2010untangling}.

Often the geometric position of the elements of the sets are
prescribed as points in the plane.
The task is to emphasize the sets where the elements belong to.
In visualization approaches for set memberships of items on maps,
this is done by connecting points from the same set by corresponding
lines (LineSets~\cite{journals/tvcg/AlperRRC11}),
tree structures (KelpFusion~\cite{meulemans2013kelpfusion}),
and enclosing polygons (BubbleSet~\cite{journals/tvcg/CollinsPC09} or
MapSets~\cite{efrat2014mapsets}).

\begin{figure}[t]
\subfigure[\label{fig:pointset-a}]{\includegraphics[width=0.24\textwidth]{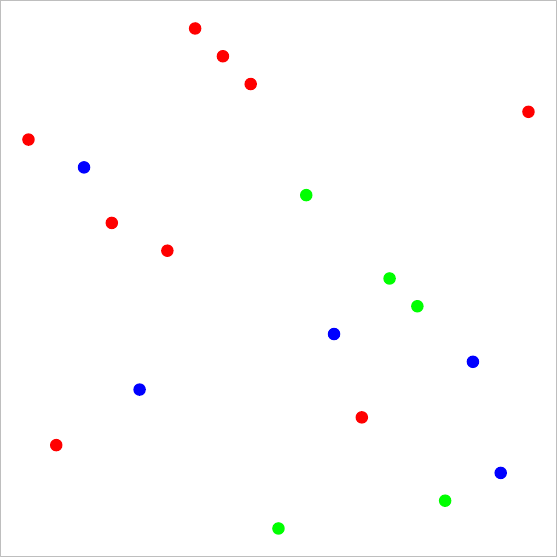}}
\hfill
\subfigure[\label{fig:pointset-b}]{\includegraphics[width=0.24\textwidth]{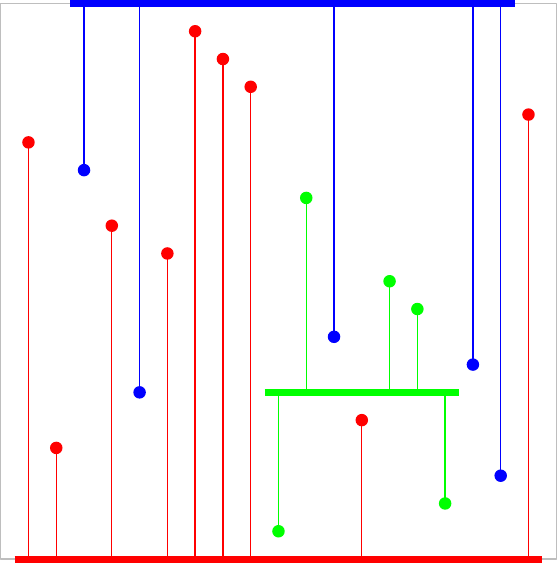}}
\hfill
\subfigure[\label{fig:pointset-c}]{\includegraphics[width=0.24\textwidth]{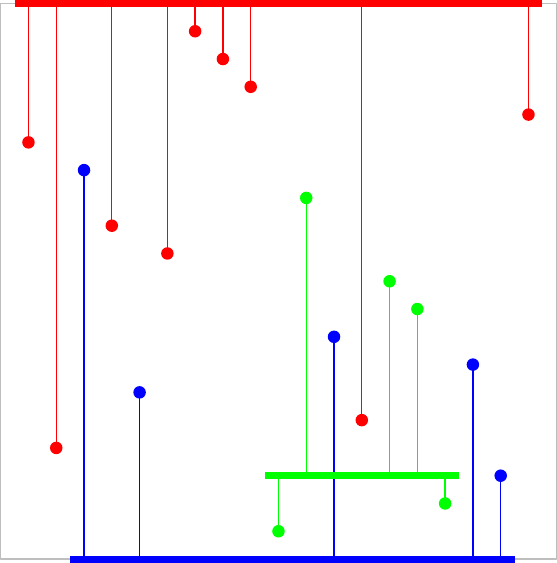}}
\hfill
\subfigure[\label{fig:pointset-d}]{\includegraphics[width=0.24\textwidth]{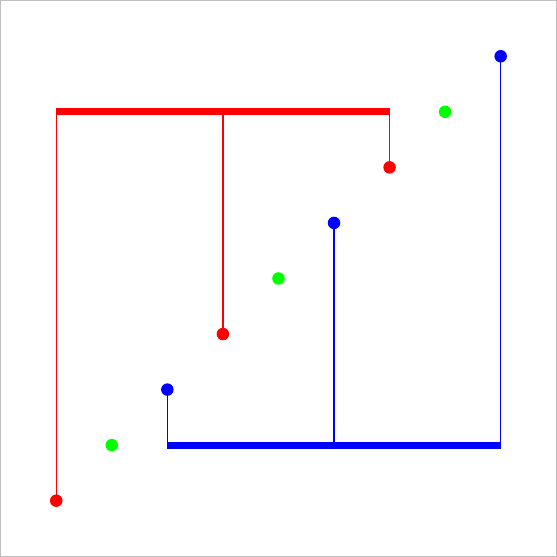}}
\caption{\subref{fig:pointset-a} Fixed positions of points,
where points with the same color belong to the same set.
\subref{fig:pointset-b}~A planar bus realization for this setting,
while \subref{fig:pointset-c}~is a non-planar bus realization.
\subref{fig:pointset-d}~A point set without any planar bus realization.}
	\label{fig:pointset}
\end{figure}

We consider a unified version of the tree-structure approach
using a model that has been applied before for drawing orthogonal buses
known from VLSI design~\cite{Thompson80acomplexity,DBLP:books/el/leeuwen90/Lengauer90}.
Our goal is a membership visualization of points in sets by a tree-structure that consists of
a single horizontal segment, called \emph{bus}, to which all the points
from the same set are connected by vertical segments, called \emph{connections};
see Fig.~\ref{fig:pointset} for planar and non-planar versions.
We assume the sets to be given by single-colored points, such that in the final
visualization, called \emph{bus realization}, every point of the same color
is connected to exactly one bus associated with this color.
The objective is to find a position for each bus,
such that crossings of buses with connections are avoided, called \emph{planar} bus realization.
We call this the \emph{bus embeddability problem} (BEP).
Such a simple visualization scheme makes it very easy to recognize the sets and label them,
by placing a label inside each bus (if the bus is drawn thick enough), or
directly above/next to the bus.

\paragraph{Related Work.}
Buses have been used, in a more general form, for visualizing degree-restricted hypergraphs.
Ada et al.~\cite{DBLP:journals/corr/abs-cs-0609127} used horizontal and vertical buses in bus realizations,
where the points (representing hypervertices contained in at most four hyperedges) were not predefined in the plane.
They asked whether a given hypergraph admits a non-planar bus realizations (allowing connections to cross each other) and showed that the problem is NP-complete.
In contrast, if a planar embedding is given, a planar bus realization can be constructed on
a $\Oh(n) \times \Oh(n)$ grid in $\Oh(n^{3/2})$
time~\cite{DBLP:conf/ciac/BruckdorferFK13}.
These types of problems also have connections to rectangular drawings, rectangular duals and visibility graphs,
since the edges of the incidence graph of a hypergraph enforce visibility constraints in the
bus realizations~\cite{He93onfinding,DBLP:journals/dcg/TamassiaT86}.

Another related approach is visualization based on graph supports of hypergraphs.
Here the goal is to connect the vertices in such a way that each hyperedge
induces a connected
subgraph~\cite{DBLP:journals/jgaa/BuchinKMSV11,DBLP:journals/jda/BrandesCPS12,DBLP:conf/swat/KlemzMN14}.
Supported hypergraph visualizations inspired edge-bundling and confluent
layouts as alternative visualizations for
cliques~\cite{Gansner06improvedcircular,DBLP:journals/corr/cs-CG-0212046,PNBH11}.

A solution to the BEP problem can be viewed as planar tree support for hypergraphs,
and this problem is related to Steiner trees~\cite{Hwang92thesteiner},
where the goal is to connect a set of points in the plane while minimizing the sum
of edge lengths in the resulting tree; this is a classic NP-complete problem~\cite{thecomplexitygarey}.
Hurtado et al.~\cite{DBLP:conf/gd/HurtadoKKLASS13} considered planar supports for
hypergraphs with two hyperedges such that the induced subgraph for every hyperedge
and the intersection is a Steiner tree. Their objective was to minimize the sum of edge lengths,
while allowing degree one or two for the hypervertices.
BEP is even more closely related to rectilinear Steiner trees~\cite{Ganley1999161},
where the Euclidean distance is replaced by the rectilinear distance;
constructing rectilinear Steiner trees is also NP-complete~\cite{gareyrectilinear1977}.
A single trunk Steiner tree~\cite{Chen:2002:RST:505348.505366}
 is a path which contains all vertices of degree greater than one.
This is a variant that is solvable in linear time.
BEP for a single set is the single trunk rectilinear Steiner
tree problem, where we ignore the minimization of the sum of the edge lengths.
Thus BEP can be seen as a simultaneous single-trunk rectilinear Steiner tree problem.
The fact that a bus placement influences the placement of other buses makes the problem hard.

Consider the input to BEP along with a box that encloses all the points.
If in BEP the buses extend to the right boundary of this box, or both to the left and
right boundary of this box, then this problem corresponds to backbone boundary
labeling and can be efficiently solved~\cite{DBLP:conf/gd/BekosCFHKNRS13}.
In backbone boundary labeling, the problem is to orthogonally connect points by a horizontal
backbone segment leading to a label placed at the boundary.
In this setting it is always possible to split the problem into two independent
subproblems, which is impossible in our case.

BEP is also related to the classical \emph{point set embeddability problem},
where given a set of points along with a planar graph, we need to determine
whether there exists a mapping of vertices to points such that
the resulting straight-line drawing is planar. The general decision problem is \NP-hard~\cite{Cabello06}.
In the variant of orthogeodesic point set embedding, Katz et al. proved that
deciding whether a planar graph can be embedded using only orthogonal edge
routing is \NP-hard~\cite{DBLP:conf/gd/KatzKRW09}.

\paragraph{Our Results.}
In Section~\ref{sec:Preliminaries} we solve BEP when the relative order of the buses is prescribed; we also show that BEP is fixed-parameter tractable (FPT) with respect to the number of colors.
In Section~\ref{sec:An ILP for BEP} we formulate an integer linear programming (ILP) formulation
for BEP and show some experimental results.
In Section~\ref{sec:Restricted Busproblem} we restrict BEP (when a bus must be above all its points, or a bus must be either at its topmost or bottommost point) and describe efficient algorithms for these settings. Another restricted version of the problem is shown to be equivalent to the problem of sorting a permutation, which is called 2-stack pushall sorting.
Finally we prove that BEP is \NP-complete, even for just two points per color, if points may not lie on buses.

% Section 2
\section{Preliminaries}
\label{sec:Preliminaries}

We begin with some definitions.
Suppose we are given a set of points ${\cal P}=\{p_1,\dots,p_n\}$
and colors ${\cal C}=\{c_1,\dots,c_k\}$
together with a function $f:{\cal P} \longrightarrow {\cal C}, f(p)=c$.
For simplicity, we assume that no two points share a coordinate in the input point set,
although in some illustrations the input points might violate this assumption.
The bus embeddability problem (BEP) asks,
whether there is a planar bus realization with one horizontal bus per color.
BEP is a decision problem, but in our descriptions whenever the answer is affirmative we also
compute a drawing. We refer to such a drawing as a \df{solution of BEP}. In the negative
case, we say that BEP has no solution.

A point $p$ has x-coordinate $x(p)$, y-coordinate $y(p)$, and color $f(p)$.
In a bus realization we have connections only between a point $p$ and a bus $c$ of the same color,
that is, $c=f(p)$. We denote by $f^{-1}(c)$ the set of points with color $c$.
Bus $c$ naturally extends from the x-coordinate $x_l(c) = \min \{x(p) | p \in f^{-1}(c)\}$
of the leftmost point to the x-coordinate $x_r(c) = \max \{x(p) | p \in f^{-1}(c)\}$
of the rightmost point of $f^{-1}(c)$. We call $[x_l(c),x_r(c)]$ the \emph{span} of $c$,
which is predefined by the input points. The y-coordinate of a bus $c$ is denoted by $y(c)$,
which is the only parameter to be determined for a solution for BEP.

Note that BEP is trivial when there are at most two colors: it is always possible
to place one bus at the top and the other (if exists) at the bottom
of the drawing. Thus in the following we assume $k > 2$.
For more than two colors, the relative order of the buses is important; see Fig.~\ref{fig:pointset}.
Suppose the y-order of the buses is prescribed. The next lemma shows that
one can check an existence of a solution for BEP respecting the order.

\begin{lemma}\label{lem:correctness}
There is a $\Oh(n \log n)$-time algorithm that,
given an order of buses, tests whether there exists a solution for BEP respecting the order.
\end{lemma}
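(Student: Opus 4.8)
The plan is to reduce the question to a purely one-dimensional feasibility test on the bus heights $y(c)$ and then to answer that test by two plane sweeps. First I would characterise when a placement respecting the prescribed order is planar. Index the buses $c_1,\dots,c_k$ in the given order from top to bottom, so that $c_i$ is required to lie above $c_j$ whenever $i<j$. A crossing can only occur between a bus $c'$ and a connection of a point $p$ whose $x$-coordinate lies in the span of $c'$, that is $x_l(c')\le x(p)\le x_r(c')$; outside its span the bus does not exist. For such a pair, writing $c=f(p)$, planarity forbids $y(c')$ from lying strictly between $y(p)$ and $y(c)$. Using the fixed order to fix the sign of $y(c)-y(c')$, this single betweenness condition collapses into a constraint on $y(c')$ and the \emph{fixed} value $y(p)$ alone: if $c$ lies below $c'$ it forces $y(c')>y(p)$, and if $c$ lies above $c'$ it forces $y(c')<y(p)$. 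After this short case analysis, each bus $c$ must separate the points of its span by colour, every point in the span coloured by a bus below $c$ sitting below $c$, and every point coloured by a bus above $c$ sitting above $c$. Setting $L(c)=\max\{y(p)\mid x_l(c)\le x(p)\le x_r(c)\text{ and }f(p)\text{ lies below }c\}$ and $U(c)=\min\{y(p)\mid x_l(c)\le x(p)\le x_r(c)\text{ and }f(p)\text{ lies above }c\}$ (with $-\infty$ and $+\infty$ for empty sets), the conditions are exactly $L(c)<y(c)<U(c)$; since all coordinates are distinct we always have $L(c)\ne U(c)$, so the open/closed distinction is immaterial.

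I would then verify sufficiency by reversing the argument: if the heights respect the order and satisfy $L(c)<y(c)<U(c)$ for every $c$, then any candidate crossing between a bus $c'$ and a point $p$ in its span is killed by the constraint on $c'$ itself, because in either sign case $y(c')$ ends up above both or below both of $y(p)$ and $y(c)$ and hence not strictly between them. This leaves the feasibility problem: do there exist heights that are strictly decreasing along the order and satisfy $L(c_i)<y(c_i)<U(c_i)$ for all $i$? I claim this holds if and only if $\max_{j\ge i}L(c_j)<U(c_i)$ for every $i$. Necessity is immediate, since any valid $y(c_i)$ exceeds every $L(c_j)$ with $j\ge i$ (as $y(c_i)>y(c_j)>L(c_j)$) and lies below $U(c_i)$. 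For sufficiency I would place $y(c_i)=M_i+\varepsilon(k-i)$ with $M_i=\max_{j\ge i}L(c_j)$ and $\varepsilon$ small; this is strictly decreasing in $i$ and lies in each interval. Computing the prefix maxima $M_i$ and running the test takes $\Oh(k)$ time once the $L(c),U(c)$ are known, and the resulting heights directly yield a witnessing drawing.

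The main obstacle, and the only place where the $\Oh(n\log n)$ bound is at stake, is the computation of all $L(c)$ and $U(c)$, since a naive scan of each span costs $\Oh(nk)$. I would instead obtain them by two sweeps over the buses in the prescribed order. For the $U$-values I process the buses top to bottom while maintaining a segment tree indexed by the compressed $x$-coordinates that stores, at each occupied leaf, the $y$-value of the point there and, at each internal node, the minimum over its subtree. Before bus $c_i$ is processed the tree contains exactly the points of the buses above it, so a range-minimum query over $[x_l(c_i),x_r(c_i)]$ returns $U(c_i)$; afterwards I insert the points of $f^{-1}(c_i)$. A symmetric bottom-to-top sweep with a range-maximum tree yields the $L$-values. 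Each of the $n$ points is inserted exactly once and each of the $k\le n$ buses issues one query, each operation costing $\Oh(\log n)$, so the two sweeps run in $\Oh((n+k)\log n)=\Oh(n\log n)$ time, matching the cost of the initial coordinate sorting. Combining the characterisation, the $\Oh(k)$ feasibility test, and the two sweeps gives the claimed $\Oh(n\log n)$ algorithm.
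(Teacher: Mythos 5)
Your proof is correct, and it reaches the paper's $\Oh(n\log n)$ bound by a genuinely different route. The paper's proof is an incremental greedy: it processes the buses bottom-to-top, places each bus at the \emph{lowest} y-coordinate that avoids crossings with already-drawn segments (computed as the maximum of the previous bus height and a range-maximum query over already-inserted points in the span), and declares infeasibility when a connection of the current color would cross a previously placed bus; correctness rests on the exchange-style observation that the bottommost valid placement is always safe. You instead derive a static, necessary-and-sufficient characterization — each bus $c$ must satisfy $L(c)<y(c)<U(c)$ together with the prescribed order — reduce existence to the prefix-maximum condition $\max_{j\ge i}L(c_j)<U(c_i)$, and compute all $L$- and $U$-values by two segment-tree sweeps. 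The two algorithms are at heart the same (your prefix maxima $M_i$ are exactly the paper's greedy heights, and both rely on range max/min queries over x-coordinates), but your formulation buys a cleaner correctness argument: both directions of the equivalence are explicit, and the infeasibility test (the $U$-constraints, checked by your top-down range-minimum sweep) is spelled out, whereas the paper's speedup paragraph only describes how to compute placements and leaves the crossing check of the quadratic version implicit. The paper's greedy, in turn, is more economical — a single sweep, no separate feasibility lemma — and directly outputs the lowest witnessing drawing. Two cosmetic points in your write-up: with the offset $\varepsilon(k-i)$ the bottommost bus gets $y(c_k)=M_k$, which violates strictness when $M_k$ is finite, so use $\varepsilon(k-i+1)$ (in fact $L(c_k)=-\infty$ always, so this only matters together with your $\pm\infty$ convention, for which you should substitute finite values below/above all points); neither affects the validity of the argument.
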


\begin{proof}
Suppose we are given an order $c_1 < \dots < c_k$ of the buses from bottom to top.
We use discrete values for the y-coordinates increasing from bottom to top, where
a unit is $1/n$ of the y-distance of two consecutive points.
We first present a simpler $\Oh(n^2)$-time algorithm, and
then describe how to speed it up.

Recall that the span of every bus is defined by an input point set; hence, we only show how to
choose y-coordinates of the buses.
The first bus, $c_1$, is placed at y-coordinate $y(c_1)=0$, and
all the points of color $c_1$ are connected to the bus. Assume that
bus $c_{i-1}$ is placed at y-coordinate $y(c_{i-1})$ and is connected to all its points.
We place $c_i$ at $y(c_i) = y(c_{i-1}) + 1$ unit and check if the bus crosses a
previously drawn (vertical)
segment. If it does cross a segment, then we shift $c_i$ one unit upwards by increasing $y(c_i)$ and
repeat the procedure. Once the bus is placed without crossings, we connect it to the
corresponding points. Consider the vertical segment of a point $p$ of color $c_i$.
It is easy to see that if $y(p) \ge y(c_i)$, then the segment cannot cross a previously
placed bus $c_j$ for $j<i$.
If $y(p) < y(c_i)$ and the vertical segment crosses a bus, then such a crossing
is unavoidable in any solution respecting the given order. Hence, we may stop the algorithm reporting
that no solution exists. Otherwise, we proceed with the next color.

The above algorithm can easily be implemented in quadratic time. However,
we can do better using the following observation: Every bus is placed at its
bottommost ``valid'' y-coordinate, that is, the one that does not produce crossings with
previously placed buses.
To find such a y-coordinate efficiently for each color,
we store all points of the already processed colors in a data structure $D$ that supports
the range operation such as ``extracting minimum/maximum on a given range''. For every color $c_i$, we
extract a point with the maximum y-coordinate in the range corresponding to the span of $c_i$.
The bus of $c_i$ is placed at the maximum of the extracted y-coordinate and the y-coordinate
of bus $y(c_{i-1})$. Then all the points of color $c_i$ are added to $D$.
A balanced tree (e.g., a segment tree) providing logarithmic complexity for insert and extract operations
is sufficient for our needs.
\end{proof}

In general the correct order of the buses for a planar bus realization is not known.
One can apply Lemma~\ref{lem:correctness} for each of the $k!$
possible bus orders, which yields an $\widetilde{\Oh}(k!)$-time\footnote{$\widetilde{O}$ hides polynomial factors.}
algorithm for BEP. Next, we improve the running time with an algorithm
providing deeper insight into the structure of the problem.

\begin{lemma}\label{lem:2n}
There is a $\widetilde{\Oh}(2^k)$-time algorithm for BEP.
\end{lemma}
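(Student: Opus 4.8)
The plan is to replace the $\widetilde{\Oh}(k!)$ exhaustive search over bus orders by a dynamic program over subsets of colors, in the style of the Held--Karp/Bellman $2^k$ algorithms for sequencing problems. I build a solution from the bottom upward, one bus at a time, and the guiding principle is that the only information about the already-placed buses that matters for the continuation is \emph{which} colors have been placed, not the order in which they were placed. This is what lets me collapse the $k!$ orders into $2^k$ states.

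Concretely, for a subset $S \subseteq \mathcal{C}$ I let $D[S]=\true$ iff the buses of the colors in $S$ can be drawn as the lowest $|S|$ buses of a planar bus realization, i.e.\ all strictly below every bus of $\mathcal{C}\setminus S$, with no crossings among themselves and their connections. The answer to BEP is then $D[\mathcal{C}]$, with base case $D[\emptyset]=\true$. For the transition I add one color $c\in S$ as the topmost bus of the block $S$:
\[
 D[S] \;=\; \bigvee_{c\in S}\Big( D[S\setminus\{c\}] \;\wedge\; \textsc{Feasible}\big(c,\,S\setminus\{c\}\big)\Big),
\]
where $\textsc{Feasible}(c,R)$ tests whether bus $c$ can be placed directly above the block formed by $R$. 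Using the bottommost placement of Lemma~\ref{lem:correctness}, this test reduces to checking that no point $q$ of color $c$ is \emph{trapped}: there must be no color $d\in R$ whose bus lies above $q$ while $x(q)$ lies inside the span of $d$, since such a configuration forces the downward connection of $q$ to cross bus $d$ and, by the unavoidability argument in Lemma~\ref{lem:correctness}, this crossing cannot be repaired once $R$ sits entirely below $c$. The DP has $2^k$ states and $k$ transitions each; maintaining, exactly as in Lemma~\ref{lem:correctness}, a balanced range tree over the points of the already-placed colors that answers range-maximum queries over spans lets each $\textsc{Feasible}$ call run in $\widetilde{\Oh}(n)$ time, for a total of $\widetilde{\Oh}(2^k)$.

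The hard part, and the real content of the correctness proof, is justifying that $\textsc{Feasible}$ depends on the \emph{set} $R$ alone and not on a particular internal arrangement of $R$. The danger is genuine: the exact bus heights produced by the greedy of Lemma~\ref{lem:correctness} depend on the processing order of $R$, and the non-betweenness constraints between a bus and the connections of other colors are not closed under taking coordinatewise-minimum heights, so there is no single ``lowest'' arrangement that dominates all others. I would resolve this by an exchange/monotonicity argument: placing $R$ as low as the greedy permits can only \emph{trap fewer} points of a color added on top (lowering a bus never creates new trapped points), and if any global order of $\mathcal{C}$ is accepted by the greedy, then the order the DP implicitly builds by repeatedly peeling off a feasible topmost color is accepted as well.

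Establishing this invariant is where I expect the main obstacle to lie, because it must guarantee that a color feasible on top of \emph{some} valid arrangement of $R$ is feasible on top of \emph{every} arrangement the DP could have committed to, so that an earlier choice never has to be revised when the next color is appended. Once this subset-sufficiency lemma is in place, the remaining pieces—correctness of the recurrence by induction on $|S|$, the data-structural realization of $\textsc{Feasible}$, and the $\widetilde{\Oh}(2^k)$ bound—are routine.
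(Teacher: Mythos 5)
Your overall strategy---a Held--Karp style dynamic program over subsets of colors that repeatedly peels off the topmost bus---is the same as the paper's, but your write-up has a genuine gap, and it is exactly the one you flag yourself: the claim that $\textsc{Feasible}(c,R)$ depends on the \emph{set} $R$ alone. As you define it, the test refers to the bus heights of the colors in $R$ (``no color $d\in R$ whose bus lies above $q$''), so it is not even well-defined until an arrangement of $R$ is fixed; the ``subset-sufficiency lemma'' that would repair this is precisely the content of correctness, and you leave it unproven. Worse, your proposed repair contradicts your own (correct) diagnosis: you first observe that the planarity constraints are not closed under taking coordinatewise-minimum heights, so there is no single lowest arrangement of $R$ dominating all others, and then you propose a monotone ``place $R$ as low as the greedy permits'' argument---but lowering one bus of $R$ can create crossings with connections inside $R$, so such a dominating push-down need not exist. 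With that lemma missing, your recurrence has no soundness proof, and no amount of data-structure work fixes it.

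The paper's proof sidesteps the need to quantify over arrangements of $R$ by two changes of formulation. First, the state is a pair $(h,B)$ rather than just $B$, where $h\in\{0,\dots,n+1\}$ is the (discretized) y-coordinate of the topmost bus; this gives $n2^k$ states, still within $\widetilde{\Oh}(2^k)$, and records the one numerical quantity the continuation needs. Second, and more importantly, the transition condition is phrased in terms of the \emph{input points} only: $F(h,B)$ can become $\true$ via a choice of topmost color $c^*\in B$ only if $h\ge h^*$, where $h^*$ is the largest y-coordinate of a point of a color in $B\setminus\{c^*\}$ lying inside the span of $c^*$, and additionally $F(h',B\setminus\{c^*\})=\true$ for some $h'<h$. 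The condition $h\ge h^*$ is necessary by a one-line planarity argument (a point of another color inside the span of $c^*$ but above the topmost bus would have its downward connection cross that bus), and it is arrangement-independent by construction, so the question ``is this a function of the state?'' never arises. Note that your ``trapped point'' test concerns the complementary crossing type (connections of $c$ against buses of $R$), which is exactly the type whose avoidability depends on how $R$ is drawn---the paper instead conditions on the crossing type that is forced by the points themselves, and appeals to Lemma~\ref{lem:correctness} (lowest valid placements) for the remaining gluing step. If you want to complete your proof, the shortest route is to enrich your state to $(h,B)$ and adopt the point-based condition; as it stands, your states carry too little information for the recurrence to be well-defined, let alone correct.
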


\begin{proof}
We solve a given instance of BEP using dynamic programming. Let us call a \df{state}
a pair $(h, B)$, where $0 \le h \le n+1$ is an integer and $B$ is a
subset of ${\cal C} = \{c_1, \dots, c_k\}$. By a solution for a state $(h, B)$ we mean
a (planar) bus realization consisting of buses for every color $c\in B$ such that
the topmost bus has y-coordinate $h$. If such a solution exists, we
write $F(h, B) = \true$, and otherwise $F(h, B) = \false$.
It is easy to see that a solution for the original BEP problem exists if and only if
$F(h, {\cal C})=\true$ for some $0\le h \le n+1$.

We reduce the problem to solving it for ``smaller'' states, that are the
states with fewer elements in $B$. As a base case, we set $F(h, B) = \true$
for all $0\le h \le n+1$ and $|B|=1$. To compute a value for a
state $F(h, B)$ with $|B|>1$, we consider a color $c^*\in B$. Let
$h^* = \max\{y(p)| f(p)\in B\setminus\{c^*\} \text{ and } x_l(c^*) \le x(p) \le x_r(c^*) \}$,
that is, the largest (topmost) y-coordinate of a point of color $B\setminus\{c^*\}$
laying in the span of $c^*$. It follows from the proof of Lemma~\ref{lem:correctness} that
the bus for $c^*$ should be placed at y-coordinate $h^*$.
Thus, $F(h, B)$ is set to $\true$ if (a)~$h \ge h^*$ and (b)~there exists a solution
for a state $(h', B\setminus\{c^*\})$ for some $h' < h$. We stress here that
in order to compute $F(h, B)$, one needs to consider every color of $B$ as a potential $c^*$.
There are $n2^k$ different states, and a computation for a single state clearly
takes a polynomial number of steps.
\end{proof}

The above result shows that the BEP problem is fixed-parameter tractable with respect to $k$, that is,
it can be efficiently solved for a small number of buses.
Note that in Section~\ref{sec:Combination of BEP restrictions}
we prove that BEP is \NP-complete; hence, it is unlikely that a polynomial-time (in terms of $k$)
algorithm exists.

% Section 3
\section{An ILP for BEP}
\label{sec:An ILP for BEP}

In this section we present an integer linear programming (ILP) formulation for BEP
that produces a planar bus realization if one exists. The ILP
also minimizes the amount of ink in a solution, that is, the sum of all segment lengths.

\begin{lemma}
A solution for BEP can be computed by an ILP.
\end{lemma}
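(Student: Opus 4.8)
The plan is to treat the $y$-coordinates of the $k$ buses as the only decision variables, to express planarity through linear constraints, and to resort to the big-$M$ disjunctive trick wherever a crossing condition is inherently non-convex. First I would discretize: by the argument in the proof of Lemma~\ref{lem:correctness} it suffices to search over integer $y$-coordinates in a bounded range, so I introduce an integer variable $y_c \in \{0,\dots,N\}$ for each color $c\in{\cal C}$, where $N=\Oh(n)$ bounds the number of relevant horizontal levels (for instance by placing each point at twice its $y$-rank, so that odd levels lie strictly between consecutive points and a bus may or may not coincide with a point level). The span $[x_l(c),x_r(c)]$ of every bus is fixed by the input, so these $y$-coordinates are the only unknowns, exactly as required by the definition of BEP.

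The heart of the formulation is to forbid crossings between a bus and a connection. Since points have pairwise distinct $x$-coordinates, no two connections cross and a connection never meets the bus of its own color; the only crossings to rule out are between a bus $c$ and the vertical connection of a point $p$ with $f(p)=c'\neq c$ whose span contains $p$, i.e.\ $x_l(c)\le x(p)\le x_r(c)$. Geometrically the connection of $p$ occupies the vertical interval between $y(p)$ and $y_{c'}$, so the forbidden event is precisely that $y_c$ lies \emph{strictly between} $y(p)$ and $y_{c'}$. This ``not strictly between'' condition is the non-convex core of the problem, and it is equivalent to the disjunction that $y_c$ is at most both $y(p)$ and $y_{c'}$, or at least both of them. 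I would linearize it with a single binary variable $s_{c,p}\in\{0,1\}$ and a sufficiently large constant $M$ (e.g.\ $M=N+1$) via
\begin{align*}
 y_c &\le y(p) + M s_{c,p}, & y_c &\ge y(p) - M(1-s_{c,p}),\\
 y_c &\le y_{c'} + M s_{c,p}, & y_c &\ge y_{c'} - M(1-s_{c,p}),
\end{align*}
so that $s_{c,p}=0$ forces $y_c$ below both and $s_{c,p}=1$ forces it above both. Adding for every pair of buses with overlapping spans a distinctness constraint $|y_c - y_{c'}|\ge 1$ (again a big-$M$ disjunction) rules out coincident horizontal segments, so that every integer feasible point corresponds to a planar bus realization and vice versa.

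Finally, to minimize the ink I would note that each bus contributes the fixed length $x_r(c)-x_l(c)$, so only the connections matter; the connection of a point $p$ has length $|y_{f(p)}-y(p)|$, which I linearize with an auxiliary variable $d_p\ge 0$ subject to $d_p \ge y_{f(p)}-y(p)$ and $d_p \ge y(p)-y_{f(p)}$, making the objective $\min \sum_p d_p$. The main obstacle I anticipate is arguing the correctness of the linearization rather than writing it down: one must check that the big-$M$ encoding captures the ``not strictly between'' region exactly, with no spurious feasible or infeasible configurations, and that restricting to the integer grid of size $N$ discards no planar realization---both of which follow from the discretization and bottommost-placement reasoning already established for Lemma~\ref{lem:correctness}. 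Since the model uses one integer per color and one binary plus four inequalities per relevant bus--point incidence, its size is polynomial, so the construction is a genuine ILP.
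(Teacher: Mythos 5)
Your formulation is essentially the paper's own ILP: the same conflicting-pair planarity constraints (for every point $p$ inside the span of a bus $c$ of different color, $y_c$ must lie above both $y(p)$ and $y_{f(p)}$ or below both), the same big-$M$ binary linearization of that disjunction, and the same linearization of the ink objective via auxiliary variables $d_p \ge \pm(y_{f(p)}-y(p))$. The only minor deviations are cosmetic or easily repaired: your explicit bus--bus separation constraints are redundant (the conflicting-pair constraints already force buses with overlapping spans apart, since a span endpoint is itself a point of that color), and your non-strict inequalities should be tightened to unit gaps on the integer grid so that a bus cannot pass exactly through a point of another color, which the paper's strict inequalities exclude.
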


\begin{proof}
In a preprocessing step we compute the span of every bus $c \in {\cal C}$.
As mentioned earlier, it remains to compute the y-coordinate variable $y(c)$ of every bus $c$.
To this end, we introduce a planarity constraint for every point $p \in {\cal P}$ within the
span of bus $c$ having a different color.
The pairs $(p, c), c \neq f(p)$ are called \df{conflicting}. Conflicting pairs $(p,c)$ are
stored in a matrix ${\cal J}$ and induce the constraint
$(y(p)<y(c)$ and $y(f(p)) < y(c))$ or $(y(p) > y(c)$ and $y(f(p)) > y(c))$.
The matrix ${\cal J}$ can be computed in $\Oh(kn)$ time, where $n$=$|{\cal P}|$ and $k$=$|{\cal C}|$.
In order to minimize the amount of ink, we sum up the lengths of all connections
and ignore the lengths of buses, as those are determined by the input.
\begin{eqnarray}
 \min && \sum_{c \in {\cal C}} \sum_{f(p)=c} |y(c) - y(p)|\notag\\
 s.t. && (y(p) < y(c) \vee y(f(p)) > y(c)) \wedge (y(p) > y(c) \vee y(f(p)) < y(c)) \quad \forall (p,c) \in {\cal J}\notag\\
      && 0 \leq y(c) \leq \max_{p \in {\cal P}} \{y(p)\} + 1\notag
\end{eqnarray}
Since absolute value (resp. ``or'') needs one more variable and $3$ constraints for every
point (resp. for every conflicting pair)
\footnote{$\min \sum |a-b| \Leftrightarrow \min \sum e, e \geq a-b, e \geq b-a, e \geq 0$;
$(a<b) \vee (c<d) \Leftrightarrow a-b<eM, c-d<(1-e)M, e \in \{0,1\}, M=\infty$},
the final ILP has $n$+$k$+$2|{\cal J}|$ variables and $3n$+$k$+$6|{\cal J}|$ constraints.
\end{proof}

% Graphic:
%%%%%%%%%%%%%%%%%%%%%%%%%%%%%%%%%%%%%%%%%%%%%%%%%%%%%%%%%%%%%%%%%%%%%%%%%%%%%%%%
\begin{figure}[t]
 \centering
 	\includegraphics[width=0.99\textwidth]{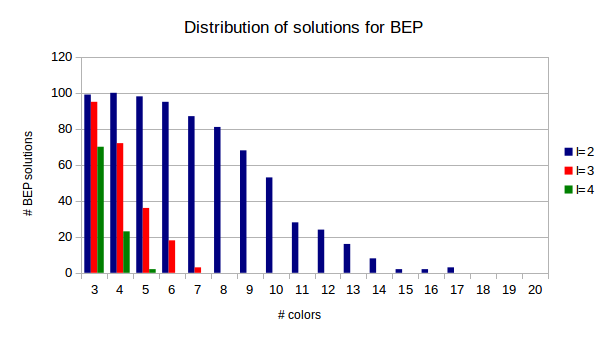}
 	\caption{The percentage of solutions for BEP for a random point set of size $n=kl$
with $l=2,3,4$ points per color out of $k=3,\dots,20$ colors.}
	\label{fig:result2}
\end{figure}
%%%%%%%%%%%%%%%%%%%%%%%%%%%%%%%%%%%%%%%%%%%%%%%%%%%%%%%%%%%%%%%%%%%%%%%%%%%%%%%%

In order to get a feeling about the probability that a point set admits a solution of BEP,
we ran an experiment with the ILP, implemented with the Gurobi solver~\cite{gurobi}.
We considered point sets with $k=3,\dots,20$ colors and with $l=2,3,4$ points per color.
We randomly placed the points on a $1024 \times 768$ area.
For each pair $(l,k)$ we counted the number of BEP solutions out of $100$ instances; see Fig.~\ref{fig:result2}.
The remaining instances were infeasible.
For a fixed number of points, $l$, the number of solutions for BEP decreases with increasing the number of colors, $k$.
It decreases faster the higher $l$ is.
On the other hand for a fixed number of colors, $k$, the number of solutions for BEP also decreases with
increasing number of points, $l$. Hence, studying two points per color promises to be sufficiently interesting.
Thus, as the base case for further analysis, we initially consider two points per color, before dealing
with the general case, where in real instances solutions rarely exist.
It is possible that much more solutions exist if we allow only few crossings,
but all non-planar settings are left as open problems.

% Section 4
\section{Efficiently Solvable BEP Variants}
\label{sec:Restricted Busproblem}

In this section we consider three variants of BEP, which can be solved in polynomial time.
A bus $c$ is called \df{top} (resp., \df{bottom})
if all of its points are below (resp., above) the bus,
that is, $y(c) \geq y(p)$ (resp., $y(c) \leq y(p)$) for all $p \in f^{-1}(c)$.
We distinguish between buses that are above (below) of their points and buses that
pass through one of their points.
A top-bus is a \df{$\sqcap$-bus} if $y(c) > y(p)$ for all $p \in f^{-1}(c)$ (Fig.~\ref{fig:type-a}),
while it is a \df{$\lefthalfcap$-bus} if $y(c) = y(p)$ for a point $p$ with
$y(p) = \max \{y(q) | q \in f^{-1}(c)\}$ (Fig.~\ref{fig:type-c}).
Similarly we define a \df{$\sqcup$-bus} and a \df{$\lefthalfcup$-bus}; see Figs.~\ref{fig:type-b}~and~\ref{fig:type-d}.
A bus, whose type is none of the four types from above, is called a \df{center-bus}.
The variant of BEP where only buses of the types in $S \subseteq \{\sqcap, \sqcup, \lefthalfcap, \lefthalfcup\}$
are allowed to use is denoted by $S$-BEP.

\begin{figure}[b!]
\subfigure[\label{fig:type-a}]{\includegraphics[width=0.15\textwidth]{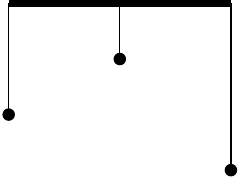}}
\hfill
\subfigure[\label{fig:type-b}]{\includegraphics[width=0.15\textwidth]{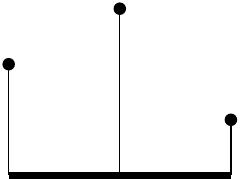}}
\hfill
\subfigure[\label{fig:type-c}]{\includegraphics[width=0.15\textwidth]{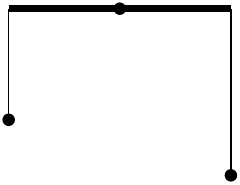}}
\hfill
\subfigure[\label{fig:type-d}]{\includegraphics[width=0.15\textwidth]{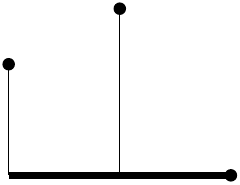}}
\caption{Illustration of \subref{fig:type-a} $\sqcap$-bus,
\subref{fig:type-b} $\sqcup$-bus, \subref{fig:type-c} $\lefthalfcap$-bus, and
\subref{fig:type-d} $\lefthalfcup$-bus.}
	\label{fig:type}
\end{figure}

In Section~\ref{sect:1bep} we study $\sqcap$-buses and provide an algorithm for $\sqcap$-BEP.
The same algorithm obviously solves the $\sqcup$-BEP variant. Next we consider
$\lefthalfcap$-buses and $\lefthalfcup$-buses. Note that $\lefthalfcap$-BEP and $\lefthalfcup$-BEP are trivial, since
every $\lefthalfcap$-bus (resp., $\lefthalfcup$-bus) is uniquely defined by its span and the topmost (bottommost)
point. Hence, we investigate and design an efficient algorithm for the $(\lefthalfcap, \lefthalfcup)$-BEP variant.
Finally in Section~\ref{sec:diagonal}, we examine the general BEP for a specific point set, where
all points lie on a diagonal. We show that the variant of the problem is equivalent to
a longstanding open problem (resolved very recently) of sorting a permutation with a series of two stacks.

\subsection{$\sqcap$-BEP}
\label{sect:1bep}
Here, we present an algorithm that decides in polynomial time whether
a drawing with $\sqcap$-buses exists for a given input,
and constructs such a drawing if one exists.

\begin{theorem}\label{lm:sqcap}
There exists an $\Oh(n \log n)$-time algorithm for $\sqcap$-BEP.
\end{theorem}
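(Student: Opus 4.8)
The plan is to reduce $\sqcap$-BEP to a single ordering problem and then reuse the greedy/sweep machinery of Lemma~\ref{lem:correctness}. First I would observe that, since the input points have pairwise distinct coordinates, connections never cross connections, and buses (being horizontal segments at distinct heights) never cross buses; hence a realization is planar iff no bus crosses a connection. For a $\sqcap$-bus $c$ we moreover have $y(c) > t(c)$, writing $t(c)=\max\{y(p) : p\in f^{-1}(c)\}$, and every connection of $c$ runs upward from its point up to $y(c)$. Thus the only thing to forbid is, for each conflicting pair $(p,b)$ with $c=f(p)$ and $p$ inside the span of $b$, that $y(b)$ lie in the open interval $(y(p),y(c))$, i.e. we require $y(b)\le y(p)$ or $y(b)\ge y(c)$.

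The key structural step is a \emph{forcing rule}. Consider such a conflicting pair $(p,b)$ with $c=f(p)$. If $y(p)\le t(b)$, then the option $y(b)\le y(p)$ is incompatible with $y(b)>t(b)$, so the only way to keep $y(b)\notin(y(p),y(c))$ is $y(b)\ge y(c)$; that is, bus $c$ must lie below $b$. Geometrically, $p$ is trapped inside $b$'s span and below $b$'s topmost point, so $c$'s bus can only reach it by running beneath $b$. I would record all such forced relations ``$c$ below $b$'' as arcs of a digraph $G$ on the colors. Acyclicity of $G$ is clearly necessary, since a directed cycle describes a set of buses each of which must lie below the next.

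It remains to show that acyclicity of $G$ is also sufficient, and to place the buses. I would process the colors from the bottom upward in a linear extension of $G$, placing each bus as low as possible exactly as in Lemma~\ref{lem:correctness}, but additionally enforcing $y(c)>t(c)$. Because a newly added bus sits above every previously placed bus, it lies above all previously drawn connections (each such connection ends at an earlier, lower bus) and therefore cannot cross them; the only possible failure is that a connection of the new bus $c$ is crossed by an already-placed lower bus $b$, i.e. a conflicting pair $(p,b)$ with $y(p)<y(b)<y(c)$. Such a pair is necessarily \emph{non-forced} ($y(p)>t(b)$), because a forced pair would have placed $c$ below $b$, contradicting the order. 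The crux, and the step I expect to be the main obstacle, is to prove that for acyclic $G$ one can always choose the linear extension so that these non-forced ``height caps'' $y(b)\le y(p)$ are never violated. Note that a failing non-forced pair can in principle be repaired by instead placing $b$ above $c$, so an \emph{unavoidable} crossing can only arise when forced arcs close a cycle; I would make this precise by always extending with an available (source) color whose topmost point $t(\cdot)$ is smallest, thereby keeping the running maximum of the placed top-points minimal, and by giving an exchange argument showing that any extension that fails at a non-forced pair certifies a directed cycle in $G$.

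Finally, for the $\Oh(n\log n)$ bound I would avoid materializing the (possibly quadratic) set of conflicting pairs. Both the construction of the forced arcs and the greedy placement only ever need, for a given span, the topmost previously seen point inside it; as in the proof of Lemma~\ref{lem:correctness}, this is answered by an $x$-sweep maintaining a balanced search tree (e.g.\ a segment tree) that supports range-maximum queries and insertions in $\Oh(\log n)$ time. Detecting a cycle, equivalently the failure of the greedy, then costs $\Oh(n\log n)$ overall, and the same sweep reports the $y$-coordinates of all buses when a $\sqcap$-realization exists.
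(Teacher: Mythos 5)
Your central structural claim---that acyclicity of the forced digraph $G$ is sufficient for solvability, equivalently that any greedy failure at a non-forced pair certifies a directed cycle in $G$---is false, so the step you yourself flagged as the crux cannot be repaired by any choice of linear extension or exchange argument. Concretely, take three colors $Y,A,B$ with two points each (writing $t(\cdot)$ for the topmost $y$-coordinate of a color, as in your proposal): $Y=\{(10,1),(11,20)\}$, $A=\{(5,2),(15,3)\}$, $B=\{(7,10),(20,12)\}$, so the spans are $[10,11]$, $[5,15]$, $[7,20]$ and $t(Y)=20$, $t(A)=3$, $t(B)=12$. The forced arcs are exactly $Y\to A$ and $Y\to B$ (the point $(10,1)$ lies in both spans below both top points) and $A\to B$ (the point $(15,3)$ lies in $B$'s span below $t(B)$); hence $G$ is acyclic with unique topological order $Y,A,B$. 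Yet no $\sqcap$-realization exists: $Y\to A$ gives $y(A)>y(Y)>t(Y)=20$, which kills the cap option $y(A)\le 10$ of the non-forced pair $\bigl((7,10),A\bigr)$, so that pair forces $y(A)>y(B)$, contradicting the forced arc $A\to B$. The phenomenon you missed is that caps are invalidated \emph{transitively}: once a bus is forced above another bus, its height is bounded below by that other bus's topmost point rather than by its own, producing derived relations (here $B$ below $A$) that can close a cycle even though $G$ has none. Your tie-breaking rule (process the source with smallest $t(\cdot)$ first) is of no help here, since the topological order in this example is unique; the greedy fails, correctly, but there is no cycle in $G$ to serve as the promised certificate, so the claimed proof of ``greedy failure implies no solution'' collapses.

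For contrast, the paper's proof never commits to a color order or to a conflict digraph at all: it sweeps the \emph{points} from bottom to top, keeps the vertical segments of the still-open colors in $x$-sorted order in a balanced search tree, and closes a color (placing its bus just above the current sweep height) exactly when its points become contiguous in that order; if the structure is nonempty at the end, it reports that no solution exists. Correctness there is a local argument---a pattern $R\dots B\dots R$ with the second $B$ unprocessed forces the red segments to grow, while an adjacent monochromatic pair can always be closed immediately and discarded---and this dynamic order automatically accounts for exactly the derived forcings that break your reduction. Your final paragraph on the $\Oh(n\log n)$ implementation (range-maximum queries, neighbor-only checks after insertion) matches the paper's, but it rests on the false structural lemma, so the proposal as a whole does not constitute a proof.
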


\begin{proof}
For ease of presentation, we first assume that the input consists of two points per color, that is,
$k=n/2$, and provide a simple quadratic-time implementation. Later we generalize the algorithm and improve
the running time. Intuitively, the algorithm sweeps a line from bottom to top and processes the
points in increasing order of y-coordinates.
At every step, we keep all the vertical segments of the ``active'' colors (the ones without a bus)
in the correct left-to-right order. If two vertical segments of the same color are adjacent
in the order, then we can draw the corresponding bus and remove the color and its vertical segments.
Otherwise, all the active vertical segments have to be ``grown'' until we reach the next point. It is easy to see
that a solution exists if and only if the set of active colors is empty after processing all the points.

More formally, the points are processed one-by-one in increasing order of their
y-coordinates. The points are stored in an array sorted by x-coordinate, that is, we have
$(p_1, \dots, p_n)$ with $x(p_1) < \dots < x(p_n)$. At each iteration,
a new point is inserted into the array in the position determined by its x-coordinate.
Then the array is modified (or simplified) so that the pairs of points of
the same color that are adjacent in the array
are removed. That is, if $f(p_i) = f(p_{i+1})$ for some $1 \le i < n$, then we get a new array
$(p_1, \dots, p_{i-1}, p_{i+2}, \dots, p_n)$.
The simplification is performed as long as the array contains monochromatic adjacent points.
After this step the algorithm proceeds with the next point.
For every color $c$, we keep the value $y^*(c)$, which is equal to the y-coordinate $y(p), p\in f^{-1}(c')$
of the point of color $c'$, whose insertion into the array induced the removal of points $f^{-1}(c)$ from the array.
If the algorithm ends up with a non-empty array, then we report that no solution exists.
Otherwise, the y-coordinate of the resulting bus of color $c$ is $y^*(c)+\eps$,
where $\eps > 0$ is sufficiently small to avoid overlaps between the buses.
An example of the algorithm is illustrated in Fig.~\ref{fig:2bendtop}.

\begin{figure}[tb]
\centering
    \includegraphics[]{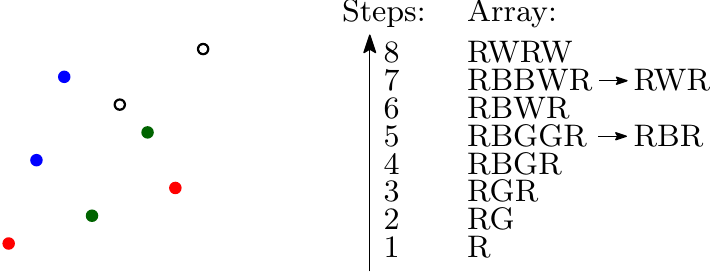}
\caption{Running the algorithm from Lemma~\ref{lm:sqcap} on a given point set with red (R), green (G),
blue (B), and white (W) pairs of points. Since the resulting array is not empty, there is no solution
for the instance. Notice that removing any of the colors yields an instance with a solution.}
\label{fig:2bendtop}
\end{figure}

\noindent
\emph{Correctness.} The correctness follows from the observation that the algorithm chooses
the lowest ``available'' y-coordinate for every bus, that is, the one that does not induce
a crossing between the bus and vertical segments of other colors. Indeed, if at any step of the
algorithm we get a color pattern $R, \dots, B, \dots, R$ in the array formed by red (R)
and blue (B) points and
the second blue point $p$ has not been processed yet, then clearly in any solution the red vertical
segments reach the y-coordinate of $p$. Hence, it is safe to ``grow'' the segments. On the other hand,
if processed points form a color pattern $RR$ (that is, two consecutive points of the same color), then
there is a solution connecting the corresponding vertical segments at the current y-coordinate. The two
points can be removed from consideration, as they cannot create crossings with the subsequent buses.
It is also easy to see that the algorithm minimizes ink of the resulting drawing.

\noindent
\emph{Running time.} At every iteration of the algorithm, we need to insert a new point into
the sorted array and then run the simplification procedure. Point insertion takes
$\Oh(n)$ time and removal of a pair of points from the array can also be done in $\Oh(n)$ time.
Since every pair is removed only once, the total running time is $\Oh(n^2)$.

To get down to $\Oh(n \log n)$ time, we use a balanced binary tree instead of an array to
store the points. The tree is sorted by the x-coordinates of the points; hence, insertion/removal
of a point takes $\Oh(\log n)$ time. Note that after inserting/removing a point, the only
potential candidate pairs for simplification are the point's neighbors that can be found in
$\Oh(\log n)$ time.
Again, every point is inserted/removed only once; thus, the total running time is $\Oh(n \log n)$.

Finally, we observe that the algorithm can be generalized to handle multiple points per color.
To this end, we change the simplification step so that the points are removed only if
they form a contiguous subsequence in the array (tree), containing all points of this color.
Hence we need to know the number of points for each color, which can be done with a linear-time scan of the input.
It is easy to see that the proof of correctness can be appropriately modified and the running time remains the same.
\end{proof}

\subsection{($\lefthalfcap, \lefthalfcup$)-BEP}
\label{sect:2bep}

We present an algorithm
that decides in polynomial time whether ($\lefthalfcap, \lefthalfcup$)-BEP has a solution
for a given input, and constructs a drawing if one exists.

\begin{theorem}\label{lm:gammal}
There exists an $\Oh(n^2)$-time algorithm for ($\lefthalfcap, \lefthalfcup$)-BEP.
\end{theorem}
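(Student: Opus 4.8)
The goal is to decide $(\lefthalfcap,\lefthalfcup)$-BEP in $\Oh(n^2)$ time. Recall that a $\lefthalfcap$-bus (resp.\ $\lefthalfcup$-bus) is \emph{completely determined} by its span and its topmost (resp.\ bottommost) point: its y-coordinate must equal that extreme point's y-coordinate. So, unlike the general problem, here the only remaining freedom for each color is a single \emph{binary} choice: is the bus a $\lefthalfcap$-bus (sitting at the highest point of the color) or a $\lefthalfcup$-bus (sitting at the lowest point)? Once this choice is fixed for every color, the y-coordinate of every bus is determined, and by Lemma~\ref{lem:correctness} we can check planarity. The difficulty is that there are $2^k$ such choices, so brute force is too slow; I need a way to propagate the consequences of these binary choices efficiently.

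First I would cast the problem as a constraint-satisfaction instance over the $k$ binary variables, one per color, where each variable records whether that color is drawn as $\lefthalfcap$ or $\lefthalfcup$. The key observation is that a single conflicting pair $(p,c)$ (a point $p$ in the span of a differently-colored bus $c$) forces a relation between the two involved colors: given the fixed candidate positions of the two buses, each conflict rules out some combinations of the two binary choices and is therefore a clause over two variables. This is exactly a \textbf{2-SAT} instance. The plan is to enumerate all conflicting pairs---of which there are $\Oh(n^2)$, computable directly from the input---translate each into the appropriate 2-SAT clauses, and then solve the resulting 2-SAT instance in time linear in its size, i.e.\ $\Oh(n^2)$ overall.

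The heart of the argument is to verify that each planarity constraint really does reduce to a clause of size two and to enumerate the cases correctly. For a conflicting pair $(p,c)$ with $f(p)=c'$, whether $p$'s vertical connection crosses bus $c$ depends on the y-coordinate of $c$ (determined by $c$'s $\lefthalfcap/\lefthalfcup$ choice) relative to both $y(p)$ and $y(c')$ (the latter determined by $c'$'s own choice). Writing out the four combinations of the two binary choices, each combination is either planar-consistent or forbidden; a forbidden combination becomes a 2-SAT clause forbidding that assignment. I would organize this as a short case analysis in terms of the two candidate positions for each of the two buses, being careful with the boundary cases where a bus passes exactly through one of its own extreme points (the half-cap/half-cup semantics), and with the degenerate subcase where a conflicting point coincides in y with one of the candidate bus levels.

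I expect the \textbf{main obstacle} to be the correctness of this reduction in both directions---showing not only that every satisfying 2-SAT assignment yields a crossing-free realization, but also that every valid $(\lefthalfcap,\lefthalfcup)$-realization corresponds to a satisfying assignment. The forward direction is the subtle one: after fixing all buses via a satisfying assignment, I must confirm that ``pairwise'' non-crossing (which is all the 2-SAT clauses guarantee) implies \emph{global} planarity, i.e.\ that no crossing can arise that is not captured by some conflicting pair. This follows because every crossing in a bus realization is, by definition, an incidence between a bus and a differently-colored vertical connection whose point lies in the bus's span---precisely a conflicting pair---so the clause set is complete. Once this completeness is established, the equivalence with 2-SAT is clear and the $\Oh(n^2)$ bound follows from the $\Oh(n^2)$ clauses and linear-time 2-SAT solving.
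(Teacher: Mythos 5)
Your proposal is correct and follows essentially the same route as the paper: exploit the fact that each color has exactly two candidate bus positions ($\lefthalfcap$ at the topmost point, $\lefthalfcup$ at the bottommost), encode the choice as a boolean variable per color, turn each pairwise planarity conflict into a 2-SAT clause, and solve the resulting instance with $\Oh(n^2)$ clauses in linear time. The only cosmetic difference is that you generate clauses per conflicting point--bus pair while the paper organizes the case analysis per pair of colors via their enclosing rectangles (Table~\ref{tab:clausetable}); both yield the same $\Oh(n^2)$ bound.
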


\begin{proof}
The span of every bus is predefined by the input, while the y-coordinate
has precisely two options. We show that ($\lefthalfcap, \lefthalfcup$)-BEP can be
modeled by 2-SAT, and thus is efficiently solvable.
For ease of presentation, we first assume that the input consists of two points per color
and describe a simple quadratic-time algorithm.

The algorithm creates a variable $x_c$ for every color $c \in {\cal C}$.
The value of $x_c$ is \emph{true} if $c$ is a $\lefthalfcap$-bus, and it is \emph{false} if $c$
is a $\lefthalfcup$-bus. Then for every pair of colors $c, c'$, the algorithm creates
a clause for the 2-SAT instance when the corresponding buses induce a crossing.
Building the clauses with respect to the relative position of points
is a straight-forward procedure; 3 examples are illustrated in Fig.~\ref{fig:buildclauses}.

\begin{figure}[tb]
\centering
    \includegraphics[width=0.80\textwidth]{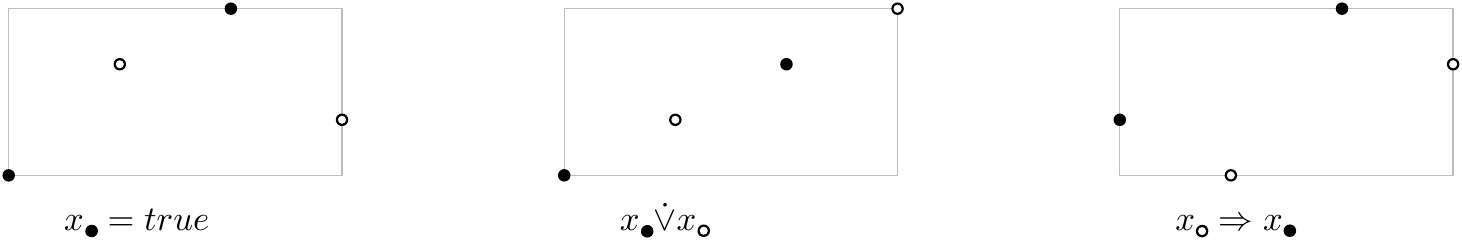}
\caption{Three examples for creating clauses for two colors black and white.}
\label{fig:buildclauses}
\end{figure}

Specifically we create clauses according to the following analysis.
Let $R(c)$ be the smallest enclosing rectangle of the points $p_c,q_c$ of color $c$.
By symmetry, we may assume that $p_c$ appears in the left bottom corner, while $q_c$ appears in
the right top corner of $R(c)$.

We distinguish the cases when
\begin{itemize}
\item[(1)] points $p_{c'},q_{c'}$ are in the top left, bottom right corner of $R(c')$ or whether
\item[(2)] points $p_{c'},q_{c'}$ are in the bottom left, top right corner of $R(c')$.
\end{itemize}

In each of the two cases we consider the $8$ subcases, which are
\begin{itemize}
\item[(a)] $R(c')$ intersects only the top boundary of $R(c)$,
\item[(b)] $R(c')$ intersects only the bottom boundary of $R(c)$,
\item[(c)] $R(c')$ intersects only the right boundary of $R(c)$,
\item[(d)] $R(c')$ intersects only the left boundary of $R(c)$,
\item[(e)] $R(c')$ contains the top right corner of $R(c)$,
\item[(f)] $R(c')$ contains the bottom right corner of $R(c)$,
\item[(g)] $R(c')$ contains the top left corner of $R(c)$,
\item[(h)] $R(c')$ contains the bottom left corner of $R(c)$.
\end{itemize}

\begin{table}[h]
\centering
\setlength{\arraycolsep}{1.5pt}
$
\begin{array}{c|c|c|c|c|c|c|c|c}
\text{cases} & a & b & c & d & e & f & g & h \\
\hline
1 & x_c=f & x_c=t & x_c=t & x_c=f & x_{c'}=t & x_c=t & x_c=f & x_{c'}=f \\
\hline
2 & x_c=f & x_c=t & x_c=t & x_c=f & x_c \dot{\vee} x_{c'} &
x_{c'} \Rightarrow x_c & \overline{x_{c'}} \Rightarrow \overline{x_c} & x_c \dot{\vee} x_{c'}\\
\end{array}
$\vspace{0.2cm}
\caption{For each of the cases (a)-(h) from above we build a clause depending
on the configuration (1)-(2) from above, where $t$ stands for true and $f$ for false.}
\label{tab:clausetable}
\end{table}

\noindent
\emph{Correctness.} The correctness follows from the complete case analysis by the rules of
Table~\ref{tab:clausetable}.

\noindent
\emph{Running time.} We remark that for the $n^2/4$ pairs of colors, we create $\Oh(n^2)$ clauses,
each clause in constant time by a case analysis.
This results in a 2-SAT instance with $k$ variables $x_c, c \in {\cal C}$ and $\Oh(n^2)$ clauses.
We solve this instance in linear time~\cite{journals/ipl/AspvallPT79}
and the solution determines the drawing: $c$ is drawn as a $\lefthalfcap$-bus, if the value of $x_c$ is \emph{true},
otherwise $c$ is drawn as a $\lefthalfcup$-bus.

We can generalize this idea to the case of more points per color.
In the general case the y-coordinate of a bus again has precisely two options.
In contrast to the case with two points per color we check several points (not only the leftmost
or rightmost point) of color $c'$ for their position with respect to the points of color $c$,
since points lie not necessarily in corners of the enclosing rectangle.
\end{proof}

\subsection{Diagonal BEP}
\label{sec:diagonal}

Here we consider a \df{diagonal} point set in which all
points lie on a single diagonal line and there are two points per color.
We assume that the point set is \df{separable}, that is, there is a straight line
separating every pair of points having the same color; see Fig.~\ref{fig:2-stack-ex2}.
This specific arrangement can be naturally described in terms of permutations.
Assuming that the colors are numbered from $1$ to $k$ in the order along the
diagonal from bottom to top, the input is described by a permutation $\pi=[\pi(1), \dots, \pi(k)]$ on
$\{1, \dots, k\}$. Such an instance is called \df{diagonal} $\pi$-BEP.

It turns out that this variant of BEP is closely related to the well-studied topic
of sorting a permutation with stacks introduced by Knuth in the 1960's~\cite{Knuth:1997:ACP:260999}.
We next show that diagonal $\pi$-BEP has a solution if and only if $\pi$ can
be sorted with 2 stacks in series. The problem of deciding whether a permutation
is sortable with 2 stacks in series is a longstanding open problem and it has been
conjectured to be \NP-complete several times~\cite{journals/combinatorics/Bona02}.
Only very recently a polynomial-time algorithm has been developed~\cite{DBLP:conf/stacs/PierrotR14,DBLP:journals/corr/abs-1303-4376}. It is an indication
that even the restricted variant of BEP is highly non-trivial. Next we prove the equivalence.

First observe that for a diagonal point set with 2 points per color, a top-bus (bottom-bus)
can be transformed to a center-bus. For every color $c$, there are no points of different color
within the span of $c$ above the topmost point of $c$. Hence, we may only consider center-buses
in the variant of BEP.

For the 2-stack sorting problem,
given a permutation $\pi$, we want to sort the numbers to the identity permutation $[1,\dots,k]$ with two
stacks $S_I, S_{II}$ using the following operations:
\begin{itemize}
 \item $\alpha_i:$ read the next element $i$ from input $\pi$ and push it on the first stack $S_I$;
 \item $\beta_i:$ pop the topmost element $i$ from $S_I$ and push it on $S_{II}$;
 \item $\gamma_i:$ pop the topmost element $i$ from $S_{II}$ and print it to the output.
\end{itemize}

%Bild %%%%%%%%%%%%%%%%%%%%%%%%%%%%%%%%%%%%%%%%%%%%%%%%%%%%%%%%%%%%%%%%%%%%%%%%%%%%%%%%
\begin{figure}[tb]
 \centering
 	\includegraphics[width=1.00\textwidth]{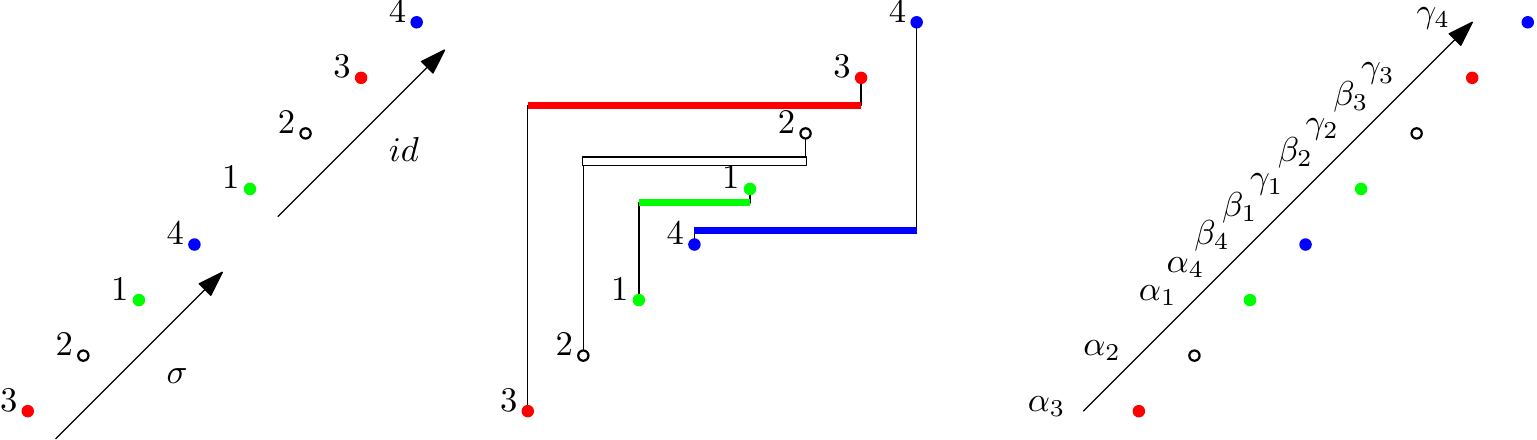}
 	\caption{A diagonal point set with a solution for BEP and the regarding sorting sequence.}
	\label{fig:2-stack-ex2}
\end{figure}
%%%%%%%%%%%%%%%%%%%%%%%%%%%%%%%%%%%%%%%%%%%%%%%%%%%%%%%%%%%%%%%%%%%%%%%%%%%%%%%%

To proof of the equivalence between 2-stack sorting and bus embeddability, we note that
the first operation, $\alpha_i$, corresponds to the left vertical segment of color $i$,
the second one, $\beta_i$, is the bus of $i$, while $\gamma_i$ corresponds to the right
vertical segment of the color; see Fig.~\ref{fig:2-stack-ex2} and Fig.~\ref{fig:2-stack}.
A crossing in the drawing correspond to an ``invalid'' sorting operation in which
either a non-topmost element is moved from $S_I$ to $S_{II}$ (a crossing to the ``left'' of the diagonal), or
a non-topmost element is moved from $S_{II}$ to the output (a crossing to the ``right'' of the diagonal).
Hence, sorting sequences of the operations for $\pi$ are in one-to-one
correspondence with planar bus realization for
the point set. Since the point set is separable, all the elements of $\pi$ will be pushed to
$S_I$ before any of the elements is popped to the output. This is called 2-stack \df{pushall}
sorting~\cite{DBLP:journals/corr/abs-1303-4376} and is considered next in more detail.

We can describe a sequence of the operations by a word $w \in \{\alpha, \beta, \gamma\}^{3n}$,
where every operation appears $n$ times.

For example
$w = \alpha_3 \alpha_2 \alpha_1 \alpha_4 \beta_4 \beta_1 \gamma_1 \beta_2 \gamma_2 \beta_3 \gamma_3 \gamma_4$
is a sorting word for $\pi_1 = 3214$ to $\pi_2=1234$ with two stacks, see Table~\ref{tab:sort}.

\begin{table}[h]
\centering
\setlength{\arraycolsep}{2.5pt}
$
\begin{array}{c|c|c|c|c}
\text{operation} & \text{input} & S_I & S_{II} & \text{output}\\
\hline
 & 3214 &  &  & \\
\hline
\alpha_3 & 214 & 3 &  & \\
\hline
\alpha_2 & 14 & 23 &  & \\
\hline
\alpha_1 & 4 & 123 &  & \\
\hline
\alpha_4 &  & 4123 &  & \\
\hline
\beta_4 &  & 123 & 4 & \\
\hline
\beta_1 &  & 23 & 14 & \\
\hline
\gamma_1 &  & 23 & 4 & 1\\
\hline
\beta_2 &  & 3 & 24 & 1\\
\hline
\gamma_2 &  & 3 & 4 & 12\\
\hline
\beta_3 &  &  & 34 & 12\\
\hline
\gamma_3 &  &  & 4 & 123\\
\hline
\gamma_4 &  &  &  & 1234\\
\hline
\end{array}
$\vspace{0.2cm}
\caption{Permutation $[3, 2, 1, 4]$ is sortable with two stacks.}
\label{tab:sort}
\end{table}
\vspace{0.5cm}

A word $w$ also encodes the input and output of a sequence by subscripts, when disregarding the subscripts of
the beta operation. For example $s(w)=32141234$ is the sequence of subscripts for $w$.

We may restrict this sequence of operations
to only $\alpha$ and $\gamma$ operations, denoted by $w|\{\alpha,\gamma\}$.
We say $w$ is a \emph{pushall word}, if $s(w|\{\alpha,\gamma\})=\pi_1\pi_2$.
The word $w' = \alpha_3 \alpha_2 \alpha_1 \beta_1 \gamma_1 \alpha_4 \beta_4 \beta_2 \gamma_2 \beta_3 \gamma_3 \gamma_4$
also sorts $\pi_1$ to $\pi_2$ with two stacks, but $w'$ is not a pushall word,
since $s(w') = 32114234 \neq \pi_1\pi_2$.

Now we assume we are given $2n$ points on a diagonal respecting the order $\pi_1,\pi_2$.
We denote by $\pi_1 (\pi_2)$ the order of the first (second) appearance of the elements.
Every output word $w$ of the 2-stack-pushall-sortable algorithm describes the
sorting from $\pi_1$ to $\pi_2$.

The 2-stack-sorting algorithm takes as input $\pi_1$ and $\pi_2$ and returns in $\Oh(n^2)$ time
one sorting word of $E = \{w : w \text{ sorts } \pi_1 \text{ to } \pi_2\}$.
If such a word $w$ exists, then we can construct a planar bus realization with center buses of the embedded points $\pi_1\pi_2$ according to $w$ as follows. We apply one of the following 3 rules on the letters
of $w$. We process $w$ letter by letter and read along $3n$ imaginary slots on the diagonal.

\begin{itemize}
 \item[$\alpha_i$] the next slot of the diagonal is point $i$ with a connection going up.
 \item[$\beta_i$] the next slot of the diagonal is taken by the horizontal segment from the end of the
connection of point $i$, then crossing the diagonal.
 \item[$\gamma_i$] the next slot of the diagonal is point $i$ with a connection down to its
horizontal segment, extended such that this connection meets perpendicular.
\end{itemize}

This drawing is planar:
\begin{itemize}
 \item any crossing of two edges incident to $i,j$ to the left of the diagonal
comes from the sequence $\dots, \alpha(i),\dots, \alpha(j), \dots, \beta(i),\dots$,
which means push $i$ on $S_I$, then push $j$ on $S_I$ and then pop $i$ from $S_I$,
which is impossible since $i$ is not the topmost element of $S_I$.
 \item any crossing of two edges incident to $i,j$ to the right of the diagonal
comes from the sequence $\dots, \beta(i),\dots, \beta(j), \dots, \gamma(i),\dots$,
which means push $i$ on $S_{II}$, then push $j$ on $S_{II}$ and then pop $i$ from $S_{II}$
(and print $i$ to the output), which is impossible since $i$ is not the topmost element of $S_{II}$.
\end{itemize}

%Bild %%%%%%%%%%%%%%%%%%%%%%%%%%%%%%%%%%%%%%%%%%%%%%%%%%%%%%%%%%%%%%%%%%%%%%%%%%%%%%%%
\begin{figure}[tb]
 \centering
 	\includegraphics[width=0.70\textwidth]{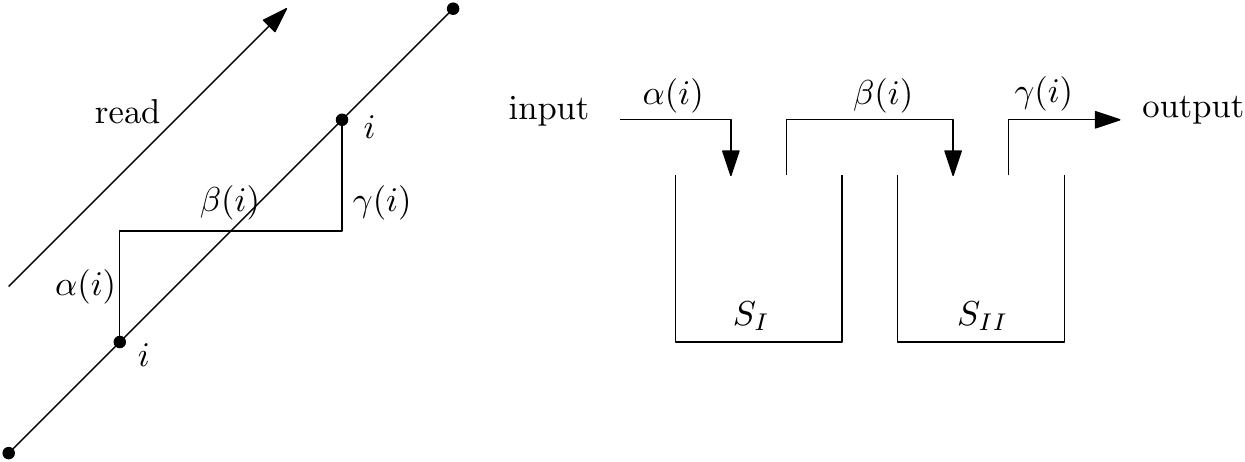}
 	\caption{A correspondence between 2-stack sorting and a planar bus realization.}
	\label{fig:2-stack}
\end{figure}
%%%%%%%%%%%%%%%%%%%%%%%%%%%%%%%%%%%%%%%%%%%%%%%%%%%%%%%%%%%%%%%%%%%%%%%%%%%%%%%%

The construction from a planar bus realization with center buses of a diagonal point $\pi_1\pi_2$ set to a sorting word $w$ for $\pi_1\pi_2$
is just traversing the diagonal from bottom to top and simultaneously building incrementally the
sorting word $w$. We start with $w=\lambda$, where $\lambda$ is the empty word. If the next item on the diagonal is the first
appearance of a letter $i$, we set $w = w \circ \alpha_i$. If the next item on the diagonal is a crossing
of the edge connecting the two points of $i$, we set $w = w \circ \beta_i$. If the next item on the diagonal
is the second appearance of a letter $i$, we set $w = w \circ \gamma_i$. It is easy to see that this
word $w$ sorts $\pi_1$ to $\pi_2$. This finishes the proof.

\begin{theorem}
Diagonal $\pi$-BEP has a solution if and only if $\pi$ is 2-stack pushall
sortable. This can be checked in $\Oh(n^2)$ time.
\end{theorem}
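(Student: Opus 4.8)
The plan is to prove the equivalence by exhibiting a bijection between planar bus realizations of the diagonal $\pi$-BEP instance and valid $2$-stack pushall sorting words for $\pi$, and then to read off the complexity bound from the known sorting algorithm. First I would invoke the observation made above that, for a separable diagonal point set with two points per color, every top- or bottom-bus can be replaced by a center-bus; this lets me restrict attention to realizations in which every bus crosses the diagonal, so that each color $i$ contributes exactly three features in bottom-to-top order along the diagonal: its lower connection, its bus crossing, and its upper connection. I would fix the correspondence $\alpha_i \leftrightarrow$ lower (left) connection, $\beta_i \leftrightarrow$ bus crossing, $\gamma_i \leftrightarrow$ upper (right) connection, so that any drawing and any word over $\{\alpha,\beta,\gamma\}$ in which each symbol occurs once per color determine each other via a single traversal of the diagonal.

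For the forward direction (sortable $\Rightarrow$ solution), I would take a valid pushall sorting word $w$ for $\pi$ and apply the three construction rules to build a drawing slot by slot along the diagonal. The key step is to verify planarity, and here I would use the crossing dichotomy established above: any crossing of the edges of two colors $i,j$ to the left of the diagonal forces the subword pattern $\alpha_i \dots \alpha_j \dots \beta_i$, i.e.\ popping a non-topmost element of $S_I$, and any crossing to the right forces $\beta_i \dots \beta_j \dots \gamma_i$, i.e.\ popping a non-topmost element of $S_{II}$. Since $w$ is a legal sorting word, neither illegal pop can occur, so the constructed drawing has no crossings and is a genuine solution.

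For the converse (solution $\Rightarrow$ sortable) I would traverse a given planar realization from bottom to top and emit $\alpha_i$, $\beta_i$, $\gamma_i$ upon meeting the first point, the bus crossing, and the second point of color $i$, respectively; reversing the same dichotomy shows the emitted $w$ performs only legal stack operations and hence sorts $\pi_1$ to $\pi_2$. The step I expect to be the most delicate is arguing that $w$ is moreover a pushall word, that is, that all $\alpha$'s precede all $\gamma$'s so that $s(w|\{\alpha,\gamma\}) = \pi_1\pi_2$; this is exactly where separability is used, since the separating line guarantees that every first appearance precedes every second appearance along the diagonal and thus forces every element into the stacks before any output. Combining both directions yields the equivalence. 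For the complexity claim, I would simply run the recently developed $\Oh(n^2)$-time $2$-stack pushall sorting algorithm on $\pi$; it returns a sorting word if and only if one exists, hence if and only if a solution exists, and the forward construction then produces the drawing in an additional linear-time scan, keeping the total running time at $\Oh(n^2)$.
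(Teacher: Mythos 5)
Your proposal is correct and follows essentially the same route as the paper: the same reduction to center-buses, the same correspondence $\alpha_i/\beta_i/\gamma_i \leftrightarrow$ left connection/bus crossing/right connection, the same crossing dichotomy (left-of-diagonal crossings correspond to illegal pops from $S_I$, right-of-diagonal crossings to illegal pops from $S_{II}$) in both directions, and the same appeal to the $\Oh(n^2)$ pushall sorting algorithm for the complexity bound. Your explicit argument that separability forces the pushall property is a welcome bit of extra care that the paper states only in passing.
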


% Section 5
\section{Hardness of BEP}
\label{sec:Combination of BEP restrictions}

In this section we consider BEP$^{\eps}$, where $\eps > 0$ is the additional input number
indicating the minimum allowed distance between points and their bus. We prove that
BEP$^{\eps}$ is \NP-complete even for 2 points per color.

We can easily verify a possible solution using Lemma~\ref{lem:correctness};
thus BEP$^{\eps}$ is in the class \NP.
We then show that ($\sqcap$,$\sqcup$)-BEP$^{\eps}$ for 2 points per color is \NP-hard.
To prove the hardness of ($\sqcap$,$\sqcup$)-BEP$^{\eps}$, we reduce from
planar 3-SAT~\cite{DBLP:journals/siamcomp/Lichtenstein82}, which is 3-SAT, where
an instance is represented by a graph whose vertices represent variables and clauses
and whose edges represent containment of variables in clauses.
The most important module of the construction is a \emph{chain link},
which is also a gadget for replacing variables.
It consists of two points on a common horizontal line that will be connected by a bus.
We replace the edges of the graph by chains consisting of nested chain links
and replace the clause vertices by a big construction of points,
that allows two specific points to be connected via a bus using only one of
three choices, cf. Fig.~\ref{fig:gadgetsPlanar3SAT}.
We use the input $\eps$ to be able to block some choices for this bus.
We first restrict ourselves to ($\sqcap, \sqcup$)-BEP$^{\eps}$
and drop the ``no points share a coordinate" restriction.
We finally transform the construction into the ``no points share a coordinate"
setting and allow also center-buses.

%Bild %%%%%%%%%%%%%%%%%%%%%%%%%%%%%%%%%%%%%%%%%%%%%%%%%%%%%%%%%%%%%%%%%%%%%%%%%%%%%%%%
\begin{figure}[tb]
 \centering
 \includegraphics[width=1.00\textwidth]{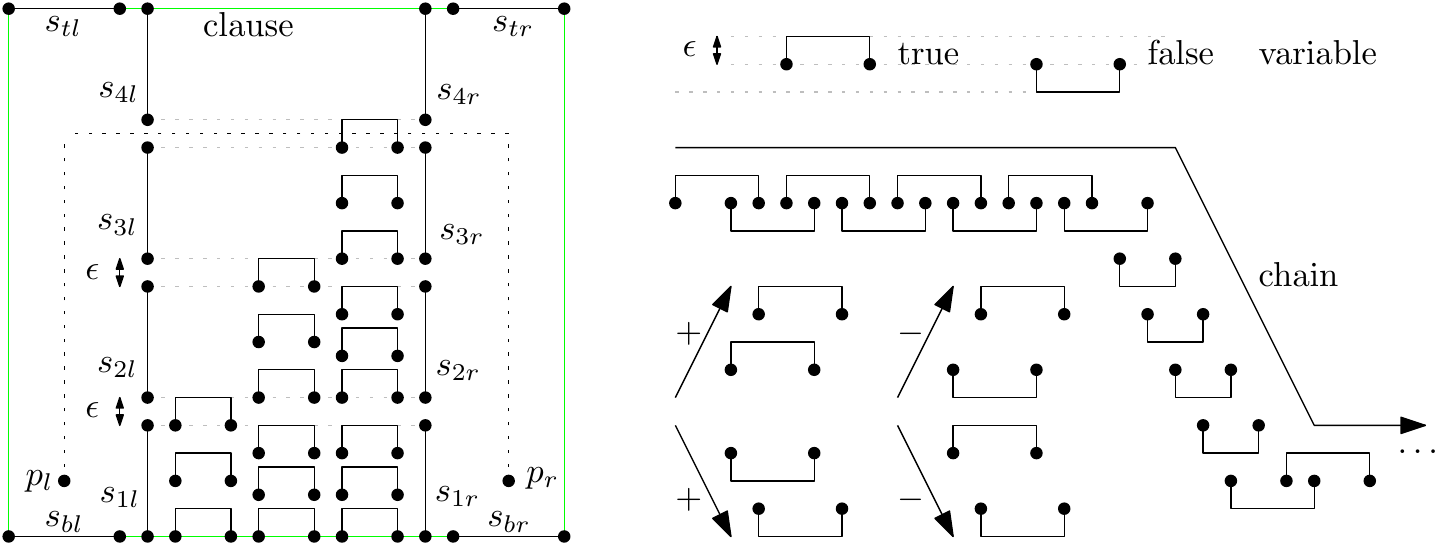}
 \caption{A clause, variable and chain gadget for reduction from planar 3-SAT.
Vertical propagation of true and false are unique, but $\sqcap$-buses are just uniquely propagated
in the top direction and $\sqcup$-buses are just uniquely propagated in the bottom direction.}
 \label{fig:gadgetsPlanar3SAT}
\end{figure}
%%%%%%%%%%%%%%%%%%%%%%%%%%%%%%%%%%%%%%%%%%%%%%%%%%%%%%%%%%%%%%%%%%%%%%%%%%%%%%%%

A {\bf variable gadget} consists of two points $a_1,a_2$ of the same color on the same y-coordinate.
The value of the variable is true if the two points are connected with a $\sqcap$-bus and
the value of the variable is false if the two points are connected with a $\sqcup$-bus.
We use a variable gadget, referred to as a \df{chain link}, also as elements of chain gadgets.

A {\bf chain gadget} propagates the value of a chain link, which is actually a variable gadget,
to another chain link.
Let $a_1,a_2$ (respectively $b_1,b_2$) be the two points of the chain link at the beginning (respectively end)
of the chain. A chain gadget consists of $k$ chain links.

In a \emph{horizontal} chain gadget we place the points on a single
horizontal line in the order $a_1,b_1,a_2,b_2$ (respectively $b_1,a_1,b_2,a_2$)
for propagating to the right (respectively to the left).
If $a_1,a_2$ are connected with a $\sqcap$-bus,
then $b_1,b_2$ must be connected with a $\sqcup$-bus and the other way round.
This construction can be repeated until the chain consists of $k$ chain links.
The \emph{sign} of a horizontal chain is defined by $(-1)^{(1+k)}$.
Clearly if the sign is positive, then the first bus and the last bus are of the same type.
If the sign is negative, then the first bus and the last bus are different.

In a \emph{vertical} chain gadget we place $b_1$ below (respectively above) $a_1$ and $b_2$ below
(respectively above) $a_2$ on the same x-coordinate with a distance of $2\eps$
for propagating to the top (respectively to the bottom).
It is easy to check that in such a way we can only \emph{uniquely propagate} a $\sqcap$-bus to the top and
a $\sqcup$-bus to the bottom. It may happen that the type of buses change during a vertical propagation.
The \emph{sign} of a vertical chain is defined as $+1$. % just for unique propagation.

The sign of two chains, which are connected, will be multiplied.
If a literal in a clause appears positive, then the corresponding chain has sign $-1$, otherwise $+1$.

A {\bf clause gadget} consists of two main points $p_l,p_r$,
4 horizontal bounding segments $s_{tl}, s_{tr}, s_{bl}, s_{br}$,
8 vertical bounding segments $s_{1l}, s_{1r}, \dots, s_{4l}, s_{4r}$,
and 18 chain links, see a schematic illustration in Fig.~\ref{fig:gadgetsPlanar3SAT}.
We aim at satisfying the clause if and only if a bus connecting the main points can be drawn.

Within a bounding square $Q$ we place horizontal bounding segment $s_{tl}$ ($s_{tr}$)
in the top left (right) corner, and horizontal bounding segment $s_{bl}$ ($s_{br}$)
in the bottom left (right) corner. Above $s_{bl}$ ($s_{br}$) we place main point $p_l$ ($p_r$),
such that there is a normal to $s_{bl}$ ($s_{br}$) through $p_l$ ($p_r$) that is also
crossing $s_{tl}$ ($s_{tr}$). This construction prevents the bus connecting the main points
to be in the exterior of $Q$.

In $Q$ there are two vertical lines $l$ and $r$ that both separate $p_l$ from $p_r$.
We place the vertical bounding segments $s_{1x}, s_{2x}, s_{3x}, s_{4x}, x \in \{l,r\}$
in this order from bottom to top on line $x$ with $\eps$ distance between every pair
of consecutive segments. The resulting horizontal space between segment $s_{ix}$ and $s_{i+1x}$
is called $i$-th \emph{gap}, $i=1,2,3$. The gaps represent the literals in the clause.
This construction restricts the choices for the bus connecting
the main points to be precisely three.

Finally we place 9 chain links below the first gap, 6 chain links between the first and second gap and
3 chain links between the second and the third gap.
More specifically let $v_1,\dots,v_6$ be 6 vertical lines between $l$ and $r$ in this order from left to right.
We place 3 chain links on lines $v_1,v_2$ such that the first chain link has its points on the
boundary of $Q$, the last chain link has its points on the bottom boundary of the first gap
and the distance between every pair of chain link points is at most $2\eps$.
Similarly we place 6 chain links on lines $v_3,v_4$ such that the first chain link has its points on the
boundary of $Q$, the 4th chain link has its points on the top boundary of the first gap,
the last chain link has its points on the bottom boundary of the second gap,
and the distance between every pair of chain link points is at most $2\eps$.
In the same way we place 9 chain links on lines $v_5,v_6$ such that the first chain link has its points on the
boundary of $Q$, the 4th (7th) chain link has its points on the top boundary of the first (second) gap,
the last chain link has its points on the bottom boundary of the third gap,
and the distance between every pair of chain link points is at most $2\eps$.
We refer to the last chain link of lines $v_1,v_2$ (respectively lines $v_3,v_4$ and $v_5,v_6$)
as \emph{the chain link of the first gap} (respectively second and third gap).
This construction allows to block or open gaps from the bottom of $Q$.

Notice that it is easy to simulate vertical or horizontal segments with points as demonstrated in
Fig.~\ref{fig:example3SAT4}.

The construction of an instance of $(\sqcap,\sqcup)$-BEP$^{\eps}$ from an instance $I$ of planar 3-SAT
is done according to a planar drawing of the graph $G_I$.
We may assume that all variable vertices of $G_I$ are on a single horizontal line.
We use this line and place the variable gadgets according
to this order. The clause vertices above the variable vertices (top clauses)
are replaced by clause gadgets and the clause vertices below the variable vertices
(bottom clauses) are replaced by horizontally mirrored clause gadgets.
Finally we replace the edges of $G_I$ with chain gadgets.

The number of points needed to construct an instance of $(\sqcap,\sqcup)$-BEP$^{\eps}$
is polynomial in $n$ and $m$. Given a planar 3-SAT instance with $n$ variables and $m$ clauses,
the corresponding $(\sqcap,\sqcup)$-BEP$^{\eps}$ instance
has at most $\Oh(nm)$ points.
For a clause gadget we need precisely 118 points, for a variable
gadget we need 2 points and for chain gadgets we need 10 points plus the points needed
to surround other clause gadgets.
If an edge from a clause to variables vertically passes $k$ other clauses,
then we need $18k$ points for this construction. Since we have $\Oh(n+m)$ edges and $m$ clauses,
we might need $\Oh(nm+m^2)$ points for the edges.

%Bild %%%%%%%%%%%%%%%%%%%%%%%%%%%%%%%%%%%%%%%%%%%%%%%%%%%%%%%%%%%%%%%%%%%%%%%%%%
\begin{figure}[tb]
 \centering
 \includegraphics[width=1.00\textwidth]{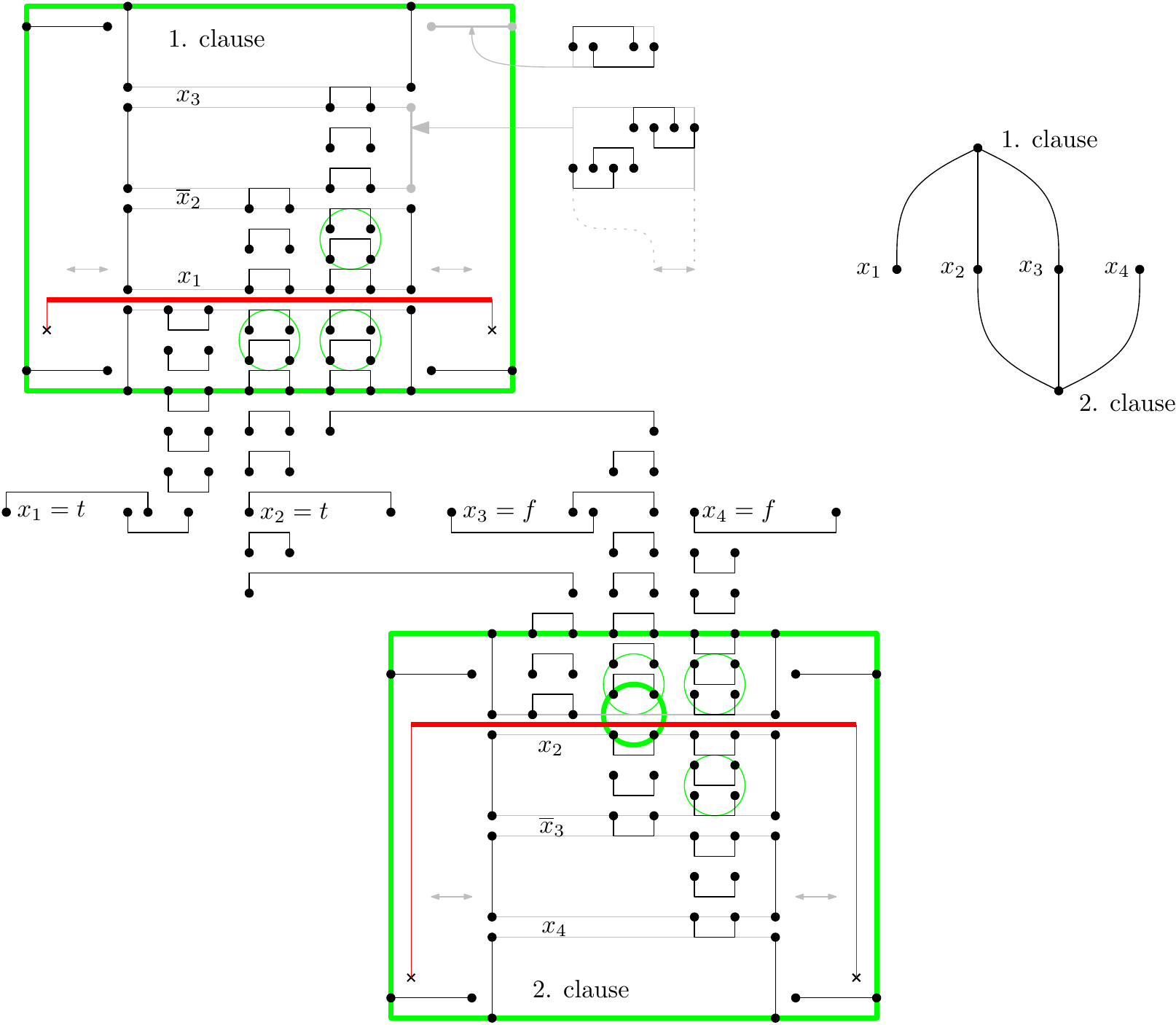}
 \caption{Point set instance constructed via gadgets for
$I = (x_1 \vee \overline{x}_2 \vee x_3) \wedge (x_2 \vee \overline{x}_3 \vee x_4)$. The red buses
indicate the truth assignment $x_3=x_4=f, x_1=x_2=t$. Clause gadgets are enclosed by a green rectangle.
The green thin circles indicate that
distances are less than $2\eps$, while the green thick circle indicates a change of bus types
during a not-unique vertical propagation of top buses to the bottom, that is, $x_3$ is meaningless for the
second clause.}
 \label{fig:example3SAT4}
\end{figure}
%%%%%%%%%%%%%%%%%%%%%%%%%%%%%%%%%%%%%%%%%%%%%%%%%%%%%%%%%%%%%%%%%%%%%%%%%%%%%%%%

Fig.~\ref{fig:example3SAT4} shows how to use a variable several times: we stretch one chain link
for horizontal propagation and add a vertical chain gadget for vertical propagation.

\begin{theorem}
$(\sqcap,\sqcup)$-BEP$^{\eps}$ for 2 points per color is \NP-complete.
\end{theorem}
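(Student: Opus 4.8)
The plan is to prove membership in \NP and \NP-hardness separately. Membership is immediate: a certificate is the vertical order of the $k$ buses, and Lemma~\ref{lem:correctness} tests in $\Oh(n\log n)$ time whether a planar realization respecting that order exists; the only change for BEP$^{\eps}$ is to additionally reject any placement in which a connection is shorter than $\eps$, which is checked in the same sweep. Thus $(\sqcap,\sqcup)$-BEP$^{\eps}$ lies in \NP. The substance of the theorem is hardness, for which I would use the reduction from planar 3-SAT assembled above out of the variable, chain, and clause gadgets, applied to a planar drawing of the associated graph $G_I$.

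I would first confirm that the reduction is polynomial, reusing the counts given above: each clause gadget uses $118$ points, each variable gadget $2$ points, and the chain gadgets realizing the $\Oh(n+m)$ edges of $G_I$ contribute $\Oh(nm+m^2)$ points, so the instance has $\Oh(nm+m^2)$ points and is constructed in polynomial time; moreover every color receives exactly two points. The heart of the argument is to verify, gadget by gadget, the intended semantics. For a variable gadget I would record the dichotomy that its two equal-height points admit only a $\sqcap$-bus or a $\sqcup$-bus, encoding $\true$ and $\false$. For a chain gadget I would establish the propagation invariant: in a horizontal chain the bus type alternates between consecutive chain links, so the sign $(-1)^{1+k}$ correctly tracks whether the two endpoints carry equal or opposite truth values, while a vertical chain, by its $2\eps$ spacing, \emph{uniquely} propagates a $\sqcap$-bus upward and a $\sqcup$-bus downward. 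The central statement is the clause-gadget lemma: the two main points $p_l,p_r$ can be joined by a planar bus if and only if at least one of the three gaps is open, where a gap is open exactly when the chain link feeding it delivers the truth value that satisfies the corresponding literal; here the four horizontal bounding segments confine the bus to the interior of $Q$, and the $\eps$-spaced vertical bounding segments $s_{1x},\dots,s_{4x}$ restrict its routing to precisely the three gap positions.

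Granting these invariants, both directions of the reduction follow. From a satisfying assignment of $I$ I would set each variable gadget's bus type according to its truth value and propagate it through the chains; the sign convention ensures that a literal occurring positively or negatively arrives at its gap with the type that opens it, so each clause's $p_l$--$p_r$ bus can be routed through a gap opened by a true literal, and the realization is planar because the gadgets are nested and planar in isolation. Conversely, from any planar realization I would read off the bus type of each variable gadget as a truth assignment; the clause-gadget lemma then forces, in every clause, at least one literal to be true, so $I$ is satisfiable. The final bookkeeping step is to lift the construction from the simplified setting used above, where points may share coordinates and only $\sqcap$- and $\sqcup$-buses are allowed, to the stated setting with no shared coordinates and center-buses permitted: I would perturb coincident coordinates by amounts small compared to $\eps$ so that no conflicting pair changes, and argue that the gadget geometry forbids a center-bus from helping, since the two points of every variable gadget lie at essentially the same height and thus force their bus entirely above or entirely below them.

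I expect the main obstacle to be the clause-gadget analysis --- proving that the bus joining $p_l$ and $p_r$ has \emph{exactly} three admissible routes and that each is opened or blocked solely by the truth value arriving at its gap. This needs a careful case analysis of how the bounding segments and the chain links of each gap interact under the $\eps$-distance constraint, ruling out in particular any choice of $\sqcap/\sqcup$ types elsewhere in the gadget that would spuriously open a blocked gap or block an opened one.
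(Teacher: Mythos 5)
Your proposal follows essentially the same route as the paper's proof: membership in \NP via a bus-order certificate checked with the algorithm of Lemma~\ref{lem:correctness}, and hardness via the planar 3-SAT reduction built from the same variable, chain, and clause gadgets (with the same sign convention, the same asymmetric vertical-propagation argument, and the same final perturbation to remove shared coordinates). The only deviation is your closing discussion of center-buses, which is superfluous for this statement, since $(\sqcap,\sqcup)$-BEP$^{\eps}$ excludes center-buses by definition; handling them is the content of the paper's subsequent theorem, where the clause gadget is modified rather than argued about as you suggest.
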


\begin{proof}
To show the membership of $(\sqcap,\sqcup)$-BEP$^{\eps}$ in the class \NP we observe that we have $n$ points and
between every pair of consecutive points we have a gap.
In every gap there can be possibly $n$ buses, that is, we have $(n-1)n$ slots,
where to place buses. So every slot represents a possibility to place a bus.
We can guess a drawing by choosing an order of the buses: all the drawings where
buses move within their gap are equivalent. To check if the order leads to a feasible
solution of $(\sqcap,\sqcup)$-BEP$^{\eps}$,
we apply the algorithm of Lemma~\ref{lem:correctness}.

We prove the hardness of $(\sqcap,\sqcup)$-BEP$^{\eps}$ by a reduction
from planar 3-SAT~\cite{DBLP:journals/siamcomp/Lichtenstein82}.
Let $I$ be an instance of the planar 3-SAT problem and let $P_I$ be the point set constructed
from the gadgets, that is, we replace in the planar graph representing $I$ every clause vertex by
a clause gadget, every variable vertex by a variable gadget and every edge by a chain gadget.
We prove next that $P_I$ admits a solution of $(\sqcap,\sqcup)$-BEP$^{\eps}$ $\Leftrightarrow$ $I$
has a satisfying truth assignment.

``$\Rightarrow$:''
If $P_I$ admits a solution, then in particular every pair of main points is connected.
Consider w.l.o.g. a top clause $c$ with literal $y$ corresponding to the gap
through which the main points are connected. We associate with $y$ the gap of $c$.
If the chain link of $y$ is a $\sqcup$-bus, then this bus is uniquely propagated
to the bottom and does not change its type.
If $y$ is a positive variable $x$, then by construction the chain has sign $-1$ and
the chain ends in a $\sqcap$-bus at the variable gadget, corresponding to $x$ being true.
If $y$ is a negated variable $\overline{x}$, then by construction the chain has sign $+1$ and
the chain ends in a $\sqcup$-bus at the variable gadget, corresponding to $x$ being false.
For bottom clauses the same argument holds horizontally mirrored.

``$\Leftarrow$:''
Assume we have a satisfying truth assignment for $I$. We explain how to construct a solution
for $(\sqcap,\sqcup)$-BEP$^{\eps}$. First for every variable being true we draw a $\sqcap$-bus,
while for every variable being false we draw a $\sqcup$-bus.
We propagate $\sqcap$-buses with $\sqcap$-buses to the top and to the bottom,
while we propagate $\sqcup$-buses with $\sqcup$-buses to the top and to the bottom.
A $\sqcap$-bus ($\sqcup$-bus) ends in a $\sqcap$-bus ($\sqcup$-bus)
if the variable in the top clause is negated (positive)
or the variable in the bottom clause is positive (negated).

We keep the type of buses in a vertical propagation as long as possible,
which can only be interrupted by a main bus. Then we change the type of buses
and the gap becomes blocked, although the variable is true and appears positive, or
the variable is false and appears negative in the clause.
Additionally this interrupting main bus indicates that this clause is already satisfied
and thus the variable of the interrupted chain
is irrelevant for the satisfiability of this particular clause.

By construction we get a feasible (planar) solution for the buses in $P_I$.

Finally we translate the construction into the ``no points share a coordinate" setting.
We may assume an underlying $k \times k$ grid with grid unit $\eps/2$ in the plane $\mathbb{R}^2$
so that all points have integer coordinates.
Let $p_1, \dots, p_n$ be the points ordered first by x-coordinate, then by y-coordinate.
We modify the x-coordinates by a shift $x(l) = x(l)+1$ for all $l \geq j$,
if $x(p_i) = x(p_j)$, as long as two points share the same x-coordinate.
We apply the same modification in y-direction with respect to the same order of points $p_1, \dots, p_n$.
Finally no points share a coordinate.

The properties of depending colors stay the same, since the topological operation is just a stretch.
Clearly chain links are dependent before the stretch, if and only if they are
dependent after the stretch.
\end{proof}

We consider as an example the instance
$I = (x_1 \vee \overline{x}_2 \vee x_3) \wedge (x_2 \vee \overline{x}_3 \vee x_4)$
of planar 3-SAT. Clearly the clause-variable graph is planar.
Fig.~\ref{fig:example3SAT4} illustrates the point set created from the instance $I$.

We can adopt the same construction when additionally using center buses.
Now some vertical segments can be modeled by using just two points.
In a clause gadget, we move one of the main points from bottom to top such that the bus connecting
the main points is necessarily a center bus. The remaining parts are the same.
Also for center buses we need $\eps$ as input for the minimum distance of buses to their points.
Notice that a bus $c$ and a point $p$ of different color $c \neq c(p)$ may be closer than $\eps$,
as well as two buses $c,c'$ may be closer than $\eps$.

\begin{theorem}
BEP$^{\eps}$ for 2 points per color is \NP-complete.
\end{theorem}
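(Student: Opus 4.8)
The plan is to reuse the planar 3-SAT reduction from the preceding theorem almost verbatim, adjusting only the single place where the newly permitted center-buses could interfere. Membership in \NP\ is identical to the $(\sqcap,\sqcup)$ case: between each pair of consecutive points there are at most $n$ candidate slots for a bus, giving $\Oh(n^2)$ slots in total, and all drawings in which every bus stays within its slot are equivalent. I would therefore guess an ordering of the buses and verify feasibility with the algorithm of Lemma~\ref{lem:correctness}. The real work is in re-establishing hardness while allowing center-buses in the gadgets.

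For hardness I would start from the point set $P_I$ built for $(\sqcap,\sqcup)$-BEP$^{\eps}$ and observe that almost every gadget is insensitive to the presence of center-buses. The key structural fact is that every chain link---hence every variable gadget and every link of a horizontal or vertical chain---consists of two same-colored points lying on a common horizontal line. A center-bus by definition requires one of its points strictly above and one strictly below the bus, and the $\eps$-constraint forbids the bus from lying exactly on its points; consequently a center-bus is simply impossible inside a chain link. Each such bus is still forced to be either a $\sqcap$- or a $\sqcup$-bus, so the truth-value and propagation semantics of the variable and chain gadgets carry over completely unchanged.

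The only gadget I would modify is the clause gadget, following the hint in the text. I would move one of the two main points from the bottom region of the bounding square $Q$ to its top region, so that the two main points no longer share a horizontal line and in fact one lies above and one below the three gaps. This forces the bus connecting the main points to satisfy $y(p_l) < y(c) < y(p_r)$, i.e.\ to be \emph{necessarily} a center-bus. Crucially, I would leave the horizontal bounding segments $s_{tl},s_{tr},s_{bl},s_{br}$ and all vertical bounding segments defining the gaps exactly in place, so the center main bus remains confined to $Q$ and must still cross the separating lines $l$ and $r$ through exactly one of the three gaps---precisely the constraint exploited before. Thus the correspondence ``gap open $\Leftrightarrow$ associated literal satisfied'' is preserved, and the equivalence ``$P_I$ admits a solution'' $\Leftrightarrow$ ``$I$ is satisfiable'' follows from the same case analysis as in the previous theorem, now with the main bus realized as a center-bus rather than a $\sqcap$- or $\sqcup$-bus.

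I expect the main obstacle to be verifying that permitting center-buses introduces no \emph{spurious} solutions---that is, that no bus can quietly switch to a center-bus and thereby satisfy a clause gadget without a corresponding truth assignment. The argument above localizes this worry entirely to the main clause bus, since every other bus inhabits a chain link and cannot be a center-bus; and for the main bus the center-bus is exactly the intended, forced behavior. Finally I would apply the same coordinate stretch as in the previous theorem to reach the ``no points share a coordinate'' setting: because the stretch is a monotone topological deformation, it preserves all dependency relations among colors, so the reduction still goes through. The construction continues to use only $\Oh(nm+m^2)$ points and is therefore polynomial, completing the proof.
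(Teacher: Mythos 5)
Your proposal takes essentially the same route as the paper: the paper also reuses the $(\sqcap,\sqcup)$-BEP$^{\eps}$ reduction verbatim, modifying only the clause gadget by moving one main point from bottom to top so that the bus connecting the main points is necessarily a center-bus, and keeps the \NP-membership argument via Lemma~\ref{lem:correctness}. Your explicit observation that the $\eps$-constraint makes center-buses impossible inside chain links is precisely the (unstated) reason the paper's claim that ``the remaining parts are the same'' is valid, so your write-up is if anything slightly more complete.
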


\section{Conclusion and Future Work}

We studied bus embeddability, where a set of colored points is covered by
a set of horizontal buses, one per color and without crossings.
We described an ILP and an FPT algorithm for the general problem and presented polynomial-time
algorithms for several restricted versions.
The general problem is shown to be \NP-complete even for two points per color
when points may not lie on buses.

It is still open to determine the complexity of BEP in the following cases:
\begin{itemize}
 \item BEP using only center-buses;
 \item $(\sqcap, \sqcup)$-BEP, that is, BEP without center-buses;
 \item diagonal BEP with more than 2 points per color;
 \item general BEP (in our construction, we use an extra $\eps$ as a parameter).
\end{itemize}

A natural generalization would be to allow both horizontal and vertical buses,
as in~\cite{DBLP:journals/corr/abs-cs-0609127,DBLP:conf/ciac/BruckdorferFK13}.
Another variant might be to consider multi-colored points, where a
point has to be connected either to all the buses of its corresponding colors,
or to at least one of them.
For point sets that have no solution for BEP with only one bus per color,
we may allow more than one bus or bound the number of crossings.
Possible objectives in these scenarios are
to minimize the total number of buses over all colors,
to minimize the total number of buses,
or to minimize the total number of buses if each tree can connect $\leq$ k unicolored points.
These objectives are even interesting if a solution to BEP exists.

\bibliographystyle{splncs03}
\bibliography{abbrv,references}

\end{document}